\newtheorem{theorem}{Theorem}
\newcommand\copyrighttext{%
\footnotesize \textcopyright \enspace 2023 IEEE. Personal use of this material is permitted. Permission from IEEE must be obtained for all other uses, in any current or future media, including reprinting/republishing this material for advertising or promotional purposes, creating new collective works, for resale or redistribution to servers or lists, or reuse of any copyrighted component of this work in other works. DOI: \href{https://doi.org/10.1109/JIOT.2023.3245727}{10.1109/JIOT.2023.3245727}
}
\newcommand\copyrightnotice{%
\begin{tikzpicture}[remember picture,overlay]
\node[anchor=south] at (current page.south) {\fbox{\parbox{\dimexpr\textwidth-\fboxsep-\fboxrule\relax}{\copyrighttext}}};
\end{tikzpicture}%
}
\begin{document}

\title{On the Limits and Best Practice for NB-Fi:\\a New LPWAN Technology}

\author{\IEEEauthorblockN{
	Dmitry Bankov\IEEEauthorrefmark{1},~\IEEEmembership{Member,~IEEE},
	Polina Levchenko\IEEEauthorrefmark{1},\\
	Andrey Lyakhov\IEEEauthorrefmark{1},~\IEEEmembership{Member,~IEEE},
	Evgeny Khorov\IEEEauthorrefmark{1},~\IEEEmembership{Senior Member,~IEEE}
}

\IEEEauthorblockA{\IEEEauthorrefmark{1}Institute for Information Transmission Problems of the Russian Academy of Sciences, Moscow, Russia\\
	Email: \{bankov, levchenko, lyakhov, khorov\}@wireless.iitp.ru}
}

\maketitle
\copyrightnotice

\begin{abstract}
NB-Fi is a new low-power wide-area network technology, which has become widely used for Smart Cities, Smart Grids, the Industrial Internet of Things, and telemetry applications.
Although many countries use NB-Fi, almost no papers study NB-Fi, and its peak performance is unknown.
This paper aims to fill this gap by analyzing this technology and studying the problem of rate assignment in NB-Fi networks.
For that, the paper develops a mathematical model used to find the packet loss ratio, packet error rate, and the average delay for various rate assignment approaches.
The performance evaluation results are used to develop the guidelines for NB-Fi configuration to optimize the network performance.
\end{abstract}

\begin{IEEEkeywords}
NB-Fi, LPWAN, ultra-narrow band, IoT, sensor networks, mathematical modeling, performance evaluation
\end{IEEEkeywords}

\section{Introduction}
\label{sec:intro}
Low Power Wide Area Networks (LPWANs) are used in many Wireless Internet of Things (IoT) systems. Although they provide much lower throughput than cellular technologies or Wi-Fi, LPWAN technologies are simple and easy to deploy. At the same time, LPWAN technologies provide greater coverage than Wi-Fi or RFID \cite{huang2020freescatter}. Many LPWAN technologies were designed by small and medium-sized enterprises (SMEs) as national-level ones and then expanded their market worldwide. For example, Sigfox~\cite{goursaud2015dedicated} and LoRa~\cite{vangelista2015long} were initially designed in France but later became used all over the world~\cite{sigfox_coverage, lorawan_coverage}.

The NB-Fi protocol~\cite{nbfi_standard} developed by WAVIoT \cite{waviot} is the first Russian LPWAN ultra narrowband (UNB) technology with an open standard. NB-Fi networks have been widely deployed in Russia~\cite{waviot_russia}, France~\cite{waviot_france}, Serbia~\cite{waviot_serbia}, India~\cite{waviot_india}, Argentina~\cite{waviot_argentina}, Moldova~\cite{waviot_moldova}, and Kazakhstan~\cite{waviot_kazahstan}.

As NB-Fi is a relatively new technology, only a few surveys briefly mention it and its basic operation parameters~\cite{smart_cities2019, petrenko2018iiot, ikpehai2018low}.
However, these surveys only provide the nominal parameters of NB-Fi, such as the maximal PHY layer bitrate, bandwidth, and transmission power, but they do not contain many important details on the protocol operation related to the Medium Access Control (MAC) and Transport layer, such as the description of frequency selection or packet retry algorithms and do not study the operation of NB-Fi networks in scenarios with multiple sensors interfering with each other.
Because of the lack of detailed information about the NB-Fi operation, its performance in typical IoT scenarios is unknown to the academic community.
At the same time, the performance of NB-Fi cannot be estimated with the models or simulations of other well-studied LPWAN technologies.
Although  NB-Fi has much in common with such popular LPWAN technologies as Sigfox~\cite{mekki2018overview} (they both use the UNB modulation) or LoRaWAN~\cite{bankov2016limits, bankov2019lorawan} (they have similar operation modes), the peculiarities of NB-Fi make it impossible to directly apply the results obtained in studies of other LPWAN technologies to NB-Fi networks.
Thus, new mathematical models of NB-Fi are needed to evaluate its performance.

NB-Fi devices can transmit their data at four possible bitrates, assigned individually to each device. Although the bitrates have \emph{the same} spectral efficiency (bit/Hz/s), the transmissions have different reliability, duration, and bandwidth. So a significant problem is how to assign the bitrates in order to reduce the packet loss rate (PLR) and/or the average delay. On the one hand, the devices could use the fastest bitrate, but on the other hand, such transmissions occupy wider channels and thus may increase collision probability. Therefore, a more detailed study is required.

The novelty and contribution of this paper are threefold.
First, we provide the first comprehensive introduction and analysis of NB-Fi, focusing on its original features related to channel access, such as the frequency selection, packet retry and acknowledgment algorithms, and many other MAC and Transport layer features.
Second, we develop a mathematical model of an NB-Fi network to investigate its limits, i.e., to find how the packet error rate (PER), packet loss ratio (PLR), and the average delay depend on the load and the network parameters.
The developed model is a model of an asynchronous, both in time and frequency domain, ALOHA~\cite{aloha} with packets that can occupy various amounts of time and frequency resources depending on their bitrates, and with a specific retry policy used in NB-Fi. 
Third, we provide guidelines on NB-Fi bitrate allocation to minimize the PLR or the average delay.
Optimizing PLR is important because it improves the network capacity, which is an essential metric for LPWANs that determines how many devices can operate in a network, while minimizing the average delay decreases power consumption, which is crucial for battery-powered sensors.

The rest of the paper is organized as follows.
Section~\ref{sec:nb-fi} introduces NB-Fi.
In Section~\ref{sec:related_works}, we review related papers.
Section~\ref{sec:problem} describes the scenario and the problem statement.
In Section~\ref{sec:model}, we describe the developed mathematical model of the NB-Fi network.
Section~\ref{sec:results} presents and discusses numerical results.
The conclusion is given in Section~\ref{sec:conclusion}.
Appendices~\ref{sec:frame_format},~\ref{sec:transport_format}, and~\ref{sec:rate_control} provide more information about NB-Fi.

\section{An Overview of NB-Fi}
\label{sec:nb-fi}

NB-Fi standard defines operation at the physical, data link, and transport layer. In other words, it describes modulation and coding schemes, channel access, frame formats, data transmission sequences, etc.

NB-Fi uses narrow channels in unlicensed ISM radio bands that limit the emitted power.
For example, in Russia, NB-Fi operates at 868.7-869.2 MHz with the power limit of \SI{100}{\mW}.

A typical NB-Fi network has a ``star of stars'' topology and consists of a server, base stations (BSs), and end devices (further referred to as \emph{sensors}).
BSs are connected to the server via a broadband link.
Sensors communicate with the server via BSs using a wireless NB-Fi link.
In contrast to cellular technologies, sensors are not associated with a single BS, so all BSs that receive a frame from a sensor redirect the frame to the server. In the reverse direction, the server chooses which BS transmits a frame to the sensor.

\subsection{Modulation}
NB-Fi sensors communicate with BSs using differential binary phase-shift keying (DBPSK) in the uplink (UL) and DBPSK or binary phase-shift keying (BPSK) in the downlink (DL).
Although DBPSK has a higher bit error rate than BPSK \cite{proakis2001digital}, it does not require estimation of carrier phase and can be used even if the transmitter's oscillators of sensors are unstable.

Despite different nominal bitrates of 50, 400, 3200, or 25600 bps, all bitrates' spectral efficiency (bit/s/Hz) is the same.
As Fig.~\ref{fig:widths} and Table~\ref{tab:mcs_values} show, with a lower bitrate, the transmission becomes longer but occupies a narrower bandwidth, which increases the power spectral density and, consequently, the transmission range.
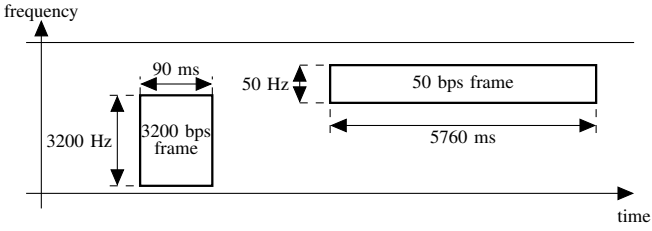
\begin{figure}[!tb]
\centering
\begin{tikzpicture}[scale=1]
	\draw [line width=0.3mm] (4.0, 1.4) rectangle (7.5, 1.9);
	\draw [line width=0.3mm] (1.5, 0.3) rectangle (2.45, 1.5);
	\node at (5.75,  1.65) {\scriptsize{50 bps frame}};
	\node at (1.975, 1.0) {\scriptsize{3200 bps}};
	\node at (1.975, 0.8) {\scriptsize{frame}};
	\draw (0, 2.2) -- (8, 2.2);
	\draw [dashed] (1.1, 0.3) -- (1.5, 0.3);
	\draw [dashed] (1.1, 1.5) -- (1.5, 1.5);
	\draw [arrows={triangle 45-triangle 45}] (1.2, 0.3) -- (1.2, 1.5);
	\node at (0.7,  0.9) {\scriptsize{3200 Hz}};
	\draw [dashed] (1.5, 1.5) -- (1.5, 1.75);
	\draw [dashed] (2.45, 1.5) -- (2.45, 1.75);
	\draw [arrows={triangle 45-triangle 45}] (1.5,1.65) -- (2.45,1.65);
	\node at (1.975,  1.85) {\scriptsize{90 ms}};
	
	\draw [dashed] (3.6, 1.4) -- (4.0, 1.4);
	\draw [dashed] (3.6, 1.9) -- (4.0, 1.9);
	\draw [arrows={triangle 45-triangle 45}] (3.6, 1.4) -- (3.6, 1.9);
	\node at (3.15,  1.65) {\scriptsize{50 Hz}};
	\draw [dashed] (4.0, 1.0) -- (4.0, 1.4);
	\draw [dashed] (7.5, 1.0) -- (7.5, 1.4);
	\draw [arrows={triangle 45-triangle 45}] (4.0, 1.1) -- (7.5, 1.1);
	\node at (5.75,  0.95) {\scriptsize{5760 ms}};
	\draw [arrows={-triangle 45}] (0, 0.2) -- (8, 0.2);
	\node at (8, -0.1) {\scriptsize{time}};
	\draw [arrows={-triangle 45}] (0.2,0) -- (0.2,2.5);
	\node at (0.2, 2.6) {\scriptsize{frequency}};
\end{tikzpicture}
\caption{Different frame widths and durations}
\label{fig:widths}
\end{figure}
For each bitrate, the standard \cite{nbfi_standard} estimates the required receiver sensitivity $S_{crit}$, shown in Table \ref{tab:mcs_values}, as follows:
\begin{equation}
S_{crit} = 10 \log_{10} kT \Delta + N_{base} + SNR_{BER},
\label{eq: S_crit}
\end{equation}
where  $k$ is the Boltzmann constant, $T = \SI{290}{\K}$ is the temperature, $\Delta$ is the estimated frame bandwidth, $N_{base} = \SI{2}{\dB}$ is the input noise, $SNR_{BER}= \SI{5}{\dB}$ is the signal-to-noise ratio (SNR) required to achieve the required bit error rate ($BER = 10^{-5}$).

\begin{table*}[!t]
\centering
\caption{Constants Depending on Bitrate}
\begin{tabular}{|c|c|c|c|c|c|c|c|c|c|}
	\hline
	BN & Bitrate, & Frequency band & Frame duration, & Sensitivity, & $T_{delay}$, ms & $T_{listen}$, ms & $T_{rnd}$, ms & $SNR_{RX/TX}$,  & Max distance  \\ 
	&  bps             & $\Delta$, Hz & ms              & dBm          & & & & dB & $R^*_{ i}$, km\\ \hline
	1  & 50    & 50    & 5760 & -150 & 5900 & 60000 & 5000 & 0  & 10.869\\
	2  & 400   & 400   & 720  & -141 & 740  & 30000 & 1000 & 9  &  6.023\\
	3  & 3200  & 3200  & 90   & -132 & 95   & 6000  & 100  & 18 &  3.337\\
	4  & 25600 & 25600 & 11.25   & -123 & 15   & 6000  & 100  & 27 &  1.849\\
	\hline
\end{tabular}
\label{tab:mcs_values}
\end{table*}

\subsection{Operation Modes}
\label{sec:modes}
NB-Fi sensors can operate in three modes.

In the continuous RX mode (CRX), the sensors always listen to the channel and can transmit and receive data at any time, which minimizes delays but consumes too much energy.

The No RX mode supports only UL communications but reduces energy consumption.
The sensor sleeps, i.e., its radio is always switched off, except for the intervals when it transmits data.

In the discontinuous RX mode (DRX), after the end of each UL transmission, the radio remains on for an interval $T_{listen}$ during which it listens for DL frames, see Fig.~\ref{fig:retry_timings}.
The server buffers all data destined for DRX sensors and transmits them during this time interval when the appropriate sensor is listening to the channel.

By default, NB-Fi devices access the channel using an ALOHA-like approach: when a device has a frame, the device transmits it without listening to the channel. At the same time, the device shall comply with the regulatory duty cycle restrictions (the corresponding rules are out of the scope of the standard) and the retransmission rules (described in Section \ref{sec:acks}).

Also, the devices may implement the Listen Before Talk (LBT) policy: they listen to the channel before transmission and postpone transmission until the channel becomes idle instead of limiting the duty cycle.

\subsection{Central Frequency Selection}
\label{sec:frequency_selection}

The NB-Fi operator needs to define UL and DL frequency bands that do not intersect. Specifically, it selects the base (central) frequency $F_{base}$ and operating bandwidth, the minimum value for which is \SI{51.2}{\kHz} for UL, and  \SI{102.4}{\kHz} for DL.
Then the operator splits each band into several subbands, which are assigned to various groups of stations.
Each subband is defined by two integer parameters: $0 \le W_{UL} \le 7$ and $-63 \le O_{UL} \le 63$, which specify the width of the subband $B_{UL} = 6400\times2^{W_{UL}}$ and the offset of subband central frequency $O_{UL, band} = B_{UL} \times O_{UL}$ with respect to $F_{base}$. 

Within the allocated subband, the sensor calculates the central frequency $f$ for a frame as a function of the bitrate, the frame parity ($p$), which equals 0 or 1 and changes the value for each frame, sensor identifier ($id$), and the three least significant bytes of the packet's message authentication code ($MIC0\_7$, see Appendix~\ref{sec:frame_format}):

\begin{equation}
f = F_{base} +  O_{UL, band} + O_{UL, f}
\label{eq:f_uplink}
\end{equation}
where $O_{UL, f}$ determines a pseudo-random position of the central frequency $f$ within the subband: 
\begin{equation}
\label{eq:uplink_channel}
O_{UL, f} = (-1)^{p + 1}G_{UL}\frac{(id + MIC0\_7)\text{ mod } 256 }{255},
\end{equation}
and $G_{UL}$ determines the ``effective'' space for selecting such a position:
\begin{equation}
G_{UL} = \begin{cases}
	\frac{B_{UL} - 2\times \Delta - 2000}{2}, & B_{UL} > 2\Delta + 2000, \\
	0, & B_{UL} \le 2\Delta + 2000.
\end{cases}
\end{equation}
In other words, if the frame is too wide with respect to the subband, $G_{UL} = 0$, and the frame is transmitted at the central frequency of the subband. Otherwise, the frame is located pseudo-randomly within the subband taking into account some guard band.

The central frequency of transmission in the DL is calculated similarly to the UL one with two modifications: $O_{DL, channel}$ does not depend on $MIC0\_7$ and its sign depends on the sensor identifier $id$ instead of the frame parity $p$. Therefore, all frames addressed to a sensor are transmitted at a fixed central frequency, simplifying the sensors' receiving chain.

\subsection{Acknowledgments and Retransmissions}
\label{sec:acks}

In NB-Fi, the acknowledgment procedure belongs to the transport layer.
It is rather flexible and can be configured independently for each sensor or even switched off. 
To reduce overhead, the procedure allows the receiver to send an acknowledgment frame not after every packet, but after a batch of packets of size $2^n$ ($n = 0, ..., 5$), where $n$ is a configurable parameter. To identify packets within the batch, the devices use the ITER field, see Fig.~\ref{fig:header}.

Once a device receives a frame with the flag ACK, it has to send back a service frame $ACK\_P$ that contains a mask indicating the received frames. 
Having received $ACK\_P$, the sender retransmits the lost frames using the same ITER values as in the previous transmission attempt. 

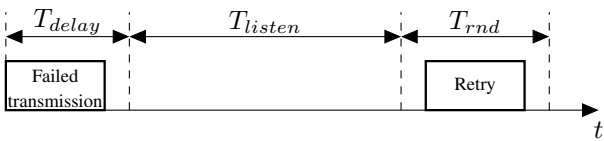
\begin{figure}[!b]
\vspace{-1em}
\centering
\begin{tikzpicture}[scale=0.65]
	\draw [arrows={-triangle 45}] (0,1) -- (12,1);
	\draw [arrows={triangle 45-triangle 45}] (0,2.5) -- (2.5,2.5);
	\draw [arrows={triangle 45-triangle 45}] (2.5,2.5) -- (8,2.5);
	\draw [arrows={triangle 45-triangle 45}] (8, 2.5) -- (11,2.5);
	\draw [line width=0.3mm] (0, 1) rectangle (2.0, 2);
	\draw [dashed] (0, 1) -- (0, 3); 
	\draw [dashed] (2.5, 1) -- (2.5, 3);
	\draw [dashed] (8, 1) -- (8, 3);
	\draw [dashed] (11, 1) -- (11, 3);
	\draw [line width=0.3mm] (8.5, 1) rectangle (10.5, 2);
	\node at (1.00,  1.7) {\scriptsize{Failed}};
	\node at (1.00,  1.2) {\scriptsize{transmission}};
	\node at (9.50,  1.5) {\scriptsize{Retry}};
	\node at (12.0,  0.6) {$t$};
	\node at (1.25,  2.8) {$T_{delay}$};
	\node at (5.25,  2.8) {$T_{listen}$};
	\node at (9.5,   2.8) {$T_{rnd}$};
\end{tikzpicture}
\vspace{-1em}
\caption{Retransmission Timings}
\label{fig:retry_timings}
\end{figure}

If the sender does not receive $ACK\_P$, it makes a retry after a random backoff time uniformly distributed within the window $\left[T_{delay} + T_{listen}, T_{delay} + T_{listen} + T_{rnd}\right]$ (see Fig.~\ref{fig:retry_timings}), where $T_{delay}$, $T_{listen}$, and $T_{rnd}$ depend on the UL bitrate (see Table~\ref{tab:mcs_values}).
Note, that this window is defined in such a way that if two frames collide and one of them is transmitted at 50 or 400 bps, and the other one is transmitted with a different bitrate, the repetitive collision of the same frames is impossible. 
Also note, that since $MIC0\_7$ depends on the contents of the frame and Crypto Iter (see Appendix~\ref{sec:frame_format}) is iterated each retry, the $MIC0\_7$ changes, too. Consequently, the sender selects a new central frequency for each transmission attempt, see \eqref{eq:uplink_channel}.

\section{Related Works}
\label{sec:related_works}
NB-Fi is a rather new technology and has not been carefully studied in the literature yet.

The paper~\cite{smart_cities2019} briefly compares LoRaWAN, Sigfox, NB-Fi, Nwave, and RPMA, considering their white-sheet properties, such as the maximal bitrate, transmission range, number of devices connected to a BS, etc. Thus, that paper only shows the nominal characteristics written in specifications, while the real network performance depends on many factors and requires a more detailed study in different scenarios.

A similar study is present in~\cite{ikpehai2018low}, which compares NB-Fi, LoRaWAN, Sigfox, RPMA, LTE-M, and NB-IoT.
This study provides nominal parameters of NB-Fi related to its PHY layer, such as the antenna gains, transmission power, and link budget, and uses the Okumura-Hata model \cite{hata1980empirical} to calculate the network coverage and energy consumption of devices with such parameters.
However, the paper contains only a PHY-layer study and does not provide any details on NB-Fi protocol, its channel access features, and how the NB-Fi network would perform in case of numerous sensors deployment.

The paper~\cite{levchenko2022performance} compares NB-Fi, Sigfox, and LoRaWAN by PLR, PER, and the average delay in different scenarios. The authors analyze network performance using simulation in specific scenarios in order to provide guidelines to decide which technology to use under which conditions. Using the developed simulation the authors come to the conclusion that NB-Fi achieves lower PLR than Sigfox and LoRaWAN in a scenario where sensors transmit small pieces of data in the most reliable way. However, the authors do not propose any optimizations to increase network performance indicators.

In the paper~\cite{petrenko2018iiot}, the authors describe a general IoT infrastructure and industrial IoT (IIoT) as a significant part of it.
The authors discuss that NB-Fi can be used in IIoT as an alternative to NB-IoT, LTE-M, LoRa, and Sigfox. However, the efficiency of NB-Fi is not studied.

Some parts of the NB-Fi technology are similar to other LPWANs. For example, both NB-Fi and Sigfox use a version of asynchronous time-frequency ALOHA channel access, taking into account the instability of sensors' oscillators. A mathematical model of Sigfox channel access is developed in \cite{li20172d}, where the authors use stochastic geometry methods to describe the time-frequency interference and show how to find the outage probability and throughput with this model.
However, the developed model describes only the case when all frames have the same duration and channel width, so this model cannot be used for NB-Fi, where devices can use different bitrates and the frames have variable duration and width.
Also, the devices generate saturated traffic in the considered model, which is not a typical scenario for sensor networks.

A recent paper on NB-Fi~\cite{bankov2022performance} analyzes the channel access method in NB-Fi using simulation and points out the factors that degrade the performance of NB-Fi networks. However, the authors do not propose a solution for the found problems and do not provide any recommendations for increasing the network's performance. In contrast, in our paper, based on the developed mathematical model, we obtain many results that guide us to provide recommendations on improving the performance of NB-Fi. 

The paper~\cite{pavlova2022efficiency} studies the efficiency of carrier sense multiple access in NB-Fi networks and its impact on energy consumption. The authors consider the Listen Before Talk (LBT) requirement for operating in unlicensed ISM radio bands imposed in some countries. However, in NB-Fi, the LBT mode (described in Section~\ref{sec:modes}) is optional, is more difficult to implement, requires more battery power, and, thus, to the best of our knowledge is not supported by many devices. Therefore, most of the devices operate using an ALOHA-like approach complying with the duty cycle restrictions, which are considered in this paper.

The NB-Fi channel access is also similar to the LoRaWAN one because of a palette of bitrates and the retry procedure. Mathematical models of LoRaWAN channel access are developed in \cite{bankov2017pimrc, bankov2019lorawan,  capuzzo2018mathematical}. In \cite{bankov2017pimrc}, the authors reveal that in LoRaWAN, the retries cannot be described by a Poisson stream even if data frames are generated according to a Poisson process.

Also, the developed model takes into account that the transmissions at different bitrates are orthogonal. The latter property is not valid for NB-Fi.
In \cite{bankov2019lorawan}, the authors expand the model from \cite{bankov2017pimrc} to a case when data frames can be lost not only due to collisions but also due to random noise in the channel. They also study how to assign bitrates to sensors in order to satisfy the heterogeneous quality of service requirements in a LoRaWAN network. 

In \cite{capuzzo2018mathematical}, the authors extend \cite{bankov2017pimrc} to consider both UL and DL traffic. They also consider that the transmissions at different bitrates can collide if the difference in the receive power is sufficiently high.

Despite the similarity, NB-Fi has some features that do not allow using results obtained for LoRaWAN. 
First, while LoRaWAN frequency channels can be considered orthogonal~\cite{huang2020loradar}, in NB-Fi, the transmissions can randomly intersect in the frequency domain. 
Second, in LoRaWAN, acknowledgment, and retry timings are the same for all bitrates, while in NB-Fi, they are different for various bitrates.

The problem of bitrate assignment for LoRaWAN has been considered in numerous papers \cite{bankov2019lorawan, bankov2020algorithm, cuomo2017explora, abdelfadeel2018fair, croce2017impact, zorbas2018improving}, which propose centralized and distributed approaches to assign bitrates to devices in order to improve the network performance.
The same problem is actual for NB-Fi networks as well, but the solutions developed for LoRaWAN cannot be used for NB-Fi because of the difference in modulations.
As mentioned above, in LoRaWAN, signals with different bitrates can be considered orthogonal, while in NB-Fi, they do interfere.
Second, in LoRaWAN, the usage of different bitrates does not affect the bandwidth occupied by the signal, while in NB-Fi, the change of bitrate changes the bandwidth and thus affects the collision probability.

In this paper, we consider the problem of bitrate assignment for NB-Fi devices, taking into account its discrepancies with the other LPWAN technologies. We design a new mathematical model of NB-Fi channel access that allows us to find the PLR and the average delay as functions of the NB-Fi network and traffic parameters. Then we use the model to evaluate various strategies to assign the bitrates to devices in order to optimize the delay and PLR.

\section{Scenario and Problem Statement}
\label{sec:problem}
Consider a network with a server, $N = 1000$ sensors distributed uniformly within a circle with a radius $R_1$, and a BS placed in the center of this circle.
The sensors and BS are static.
The sensors operate in the DRX mode and transmit frames to the server via the BS.
The sensors generate new frames according to the Poisson stream with a total intensity $\lambda$.
There is no traffic in the DL except for acknowledgments sent by the BS for each successful frame from sensors.
We assume that the sensors are simple devices with limited memory and thus can buffer only one frame at a time.
If a sensor generates a new frame while it transmits another frame, then the new frame preempts the existing one, if it fails. 
The maximal number of transmission attempts per frame is $RL = 7$.
All frames are 36-byte long as specified in the NB-Fi standard.

All devices use the transmission power of $E_T$.
The frames are received successfully if the signal-to-interference-and-noise ratio (SINR) in all parts of the frame exceeds some threshold $\nu$ for the entire duration of the frame.

In the paper, we study how to allocate bitrates to the sensors to minimize the average delay and PLR in the NB-Fi network. 
At first sight, the sensors could use the standardized bitrate allocation algorithm described in Appendix \ref{sec:rate_control}, which tries to maximize the bitrate for all the devices. 
However, the usage of this algorithm may be inefficient as the fastest bitrates occupy more bandwidth and thus may increase collision probability even for devices with the other bitrates. 
To study this problem, we consider that the bitrates are assigned to the sensors during network initialization in some manner and do not change with time. Following the idea of the standardized bitrate allocation algorithm, 
we define signal power thresholds $S_{i}$, $i = 1, 2, 3, 4, 5 $ such that $S_{5} = \infty$ and $ \forall i<5, S_{i} \ge S_{i - 1}$.
Let the bitrate number (BN) $i$ be assigned to a sensor if its average signal power $S$ satisfies the inequality $S_{i} \leq S < S_{i+1}$.
We assume that the network is static, so the average sensor's signal power at the BS is determined by its distance to the BS.
Thus, by controlling the power thresholds $S_{i}$, we allocate bitrates to appropriate portions $p_i$ of sensors.
The acknowledgments are transmitted using the same bitrates as the corresponding frames in the UL.

Within such a scenario, we state the problems \emph{to find the bitrate distribution $p_i$ among the sensors such that minimizes PLR} and \emph{to find the bitrate distribution $p_i$ among the sensors such that minimizes the delay}.
Notably, delay minimization is an important issue for sensor networks not only for the sake of delay itself but also because we reduce the time the sensor is on and transmits its data or listens to the channel waiting for the acknowledgment. Consequently, we reduce energy consumption.

\section{Mathematical Model}
\label{sec:model}
Here and further, we consider that the sensors operate in the same subband, i.e., they have the same $F_{base}$ and $O_{UL, band}$ (see \eqref{eq:f_uplink}).
Let us pick a sensor that transmits its packet at BN $i$.
Let us find the probability that the packet is delivered successfully.
According to the preliminary simulation results, the success probability of the initial transmission attempt significantly differs from that of a retry, while the success probability of the initial and consecutive retries is almost the same. 
Therefore, we derive the probabilities for the initial transmission attempts and for retries separately. 
Also, we consider that in real-life scenarios of interest, the collision rate should be low.
It means that the traffic is not very heavy, and the number of retries is much less than the number of initial transmission attempts.

We describe the mathematical model as follows.
In Section \ref{sec:initial}, we derive the probability of success during an initial transmission attempt.
To calculate it, we need the probability of the sensors' signal to have an SINR below $\nu$ which, in its turn, depends on the distribution of frequency difference of simultaneously transmitted packets.
We find these values in Sections \ref{sec:prob_freq} and \ref{sec:interference}.
After that, in Section \ref{sec:retry}, we derive the probability of successful transmission when a sensor makes a retry.
With the successful transmission probabilities, we find PLR in Section \ref{sec:plr} and the average delay in Section \ref{sec:delay}.
Finally, in Section \ref{sec:optimization}, we show how to use the developed model to optimize the PLR or delay.

\subsection{Initial Transmission Attempt}
\label{sec:initial}
In this Section, we derive the probability of success during an initial attempt.
We split it into the probability of successful transmission of a packet and an acknowledgment, the first one is found using the properties of Poisson streams, while the second one is estimated under an assumption of low traffic intensity.

A packet is delivered successfully, and the sensor stops sending it once it is received by the BS, and the corresponding acknowledgment is received by the sensor.
Thus, the success probability for the initial transmission attempt with the BN $i$ equals 
\begin{equation}
\label{eq:psfirst}
P^{S}_{ini,i} = P^{Data}_{ini, i} P^{Ack}_{ini, i},
\end{equation}
where $P^{Data}_{ini, i}$ is the success probability for a packet initial transmission attempt at BN $i$, and $P^{Ack}_{ini, i}$ is the success probability for the corresponding acknowledgment, provided that the data is successfully delivered.

Consider a single BS scenario, so no packets collide in the DL channel.
We assume that with the selected bitrate, the transmission is reliable, so an acknowledgment can be lost only if the BS cannot transmit it within the $T_{listen}$ interval after the reception of the packet because the BS has many pending acknowledgments in the buffer which have to be transmitted on the same frequency.
However, the value of $T_{listen}$ (see Table \ref{tab:mcs_values}) is much higher than the packet duration. So taking into account the low traffic rate, the possibility that the BS cannot deliver ACK within $T_{listen}$ is negligible and $P^{Ack}_{ini, i} = 1$.

Thus, the average PER for the initial transmission attempt equals:
\begin{equation}
\label{eq:perfirst}
PER_{ini} = 1-\sum_{i = 1}^{4} p_i P^{S}_{ini, i}=1-\sum_{i = 1}^{4} p_i P^{Data}_{ini, i}.
\end{equation}

Let us find $P^{Data}_{ini, i}$.
The packet delivery ratio shall be high enough in typical IoT scenarios of interest. Consequently, as the number of retries is much smaller than the number of the initial transmission attempts, we neglect the influence of retries on the initial transmission of the packet as well as packet drops because of non-empty buffers.
Given the flow intensity of all packets generated by all the sensors $\lambda$, the intensity of the packets at BN $i$ equals $\lambda_i = \lambda p_i$.
A transmission attempt of a packet is unsuccessful if at least one packet intersects the considered one in time and frequency, and the induced interference power is high enough.
$P^{Data}_{ini, i}$ is the probability of an opposite event:
\begin{equation}
\label{eq:pdata}
P^{Data}_{ini, i} = \smashoperator{\sum_{\substack{0 \leq k_j \leq N_j \\ j=1,...,4}}} Q^{freq}_{i, k_1, k_2, k_3, k_4} \prod_{j=1}^{4} \frac{\lambda^{k_j}_j (T_i + T_j)^{k_j}}{k_j!}e^{-\lambda_j \left(T_i + T_j\right)} ,
\end{equation}

Here, $N_j$ is the number of sensors using BN $j$. So, we sum over the possible numbers of \textit{other} sensors $k_j$ that use BN $j$ and generate a packet during the time interval $[-T_j, T_i]$, where $t = 0$ is the start of the considered packet transmission and $T_i$ is the duration of a packet at BN $i$.
Inside the sum, we multiply 
\begin{itemize}
\item $Q^{freq}_{i, k_1, k_2, k_3, k_4}$ which is the success probability of the considered packet provided that $k_1, ..., k_4$ packets are generated at corresponding BNs during the interval $[-T_j, T_i]$ and possibly interfere with it,\footnote{In other words, $Q^{freq}_{i, k_1, k_2, k_3, k_4}$ considers packet intersection in frequency and the events that the SINR is below $\nu$ (\SI{7}{\dB} in our scenario).} and
\item the product of probabilities that in a Poisson flow, $k_j$ packets are generated during the interval $[-T_j, T_i]$.
\end{itemize}

Obviously, $Q^{freq}_{i, 0, 0, 0, 0} = 1$, i.e., the packet is successful if no other packets interfere.

In the general case, the exact formula for $Q^{freq}_{i, k_1, k_2, k_3, k_4}$ is rather complex and requires consideration of many types of packet intersections.
To simplify the calculations, we assume that the interference from a packet on the received packet can be considered independently of the other interfered packets.
With such an assumption, we introduce $Q_{i, j}$ as the success probability of the considered packet transmitted at the BN $i$ provided that another packet at the BN $j$ is generated during the interval $[-T_j, T_i]$. It yields us the following theorem.

\begin{theorem}
\label{th:pdata}
If the packet intersections are independent of the other packets and the number of sensors is infinite ($N_j \to \infty, j=1,...,4$), then
\begin{equation}
	P^{Data}_{ini, i} = e^{-\sum_{j = 1}^{4}\lambda_j (T_{i} + T_j)(1 - Q_{i,j})}.
	\label{eq:theorem}
\end{equation}
\end{theorem}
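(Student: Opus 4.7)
The plan is to start directly from the explicit sum in \eqref{eq:pdata} and show that the independence assumption collapses it into a product of independent Poisson-type sums, each of which telescopes to a simple exponential as $N_j\to\infty$.

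First, I would unpack the independence hypothesis: if each interfering packet acts independently on the tagged packet, then for any configuration of $k_1,\dots,k_4$ interferers the joint success probability factorises as
\begin{equation*}
Q^{freq}_{i, k_1, k_2, k_3, k_4} = \prod_{j=1}^{4} Q_{i,j}^{k_j},
\end{equation*}
since $Q_{i,j}$ is by definition the probability that a single BN-$j$ packet arriving in $[-T_j,T_i]$ does not kill the tagged packet, and the $k_j$ interferers at BN $j$ all have the same marginal effect. Substituting this into \eqref{eq:pdata}, the multi-index sum over $(k_1,\dots,k_4)$ separates into a product of four one-dimensional sums, one per BN.

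Second, I would handle the limit $N_j\to\infty$. The $j$-th one-dimensional factor becomes
\begin{equation*}
\sum_{k_j=0}^{\infty} \frac{\bigl(Q_{i,j}\,\lambda_j (T_i+T_j)\bigr)^{k_j}}{k_j!}\, e^{-\lambda_j(T_i+T_j)} = e^{-\lambda_j(T_i+T_j)(1-Q_{i,j})},
\end{equation*}
by recognising the exponential series $\sum_k x^k/k! = e^x$ with $x = Q_{i,j}\lambda_j(T_i+T_j)$ and combining the resulting $e^{x}$ with the prefactor $e^{-\lambda_j(T_i+T_j)}$. Taking the product over $j=1,2,3,4$ turns the sum of exponents into a single exponent, giving exactly \eqref{eq:theorem}.

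The step that deserves the most care is not the algebra but the conceptual justification of the factorisation of $Q^{freq}$: strictly speaking, the $k_j$ interferers share the same time window, so their arrival times and frequency offsets are exchangeable but not literally independent. I would therefore frame that factorisation as the precise content of the ``packet intersections are independent'' assumption already invoked in the paragraph preceding the theorem, and note that the $N_j\to\infty$ limit is what lets us dispense with the upper truncation at $N_j$ and close the Taylor series. Everything else is bookkeeping.
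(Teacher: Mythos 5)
Your proposal is correct and follows essentially the same route as the paper's own proof: factorising $Q^{freq}_{i,k_1,k_2,k_3,k_4}=\prod_j Q_{i,j}^{k_j}$ under the independence assumption, separating the multi-index sum into a product of one-dimensional sums via distributivity, and closing each as an exponential series in the limit $N_j\to\infty$. Your closing remark on what the independence assumption is really buying is a fair clarification, but it does not change the argument.
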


\begin{proof}
If packet intersections can be considered independently of each other, then the success probability can be factorized:
\begin{equation}
	Q^{freq}_{i, k_1, k_2, k_3, k_4} = \prod_{j = 1}^{4} Q_{i, j}^{k_j},
\end{equation}
With such an assumption, we obtain
\begin{equation}
	\label{eq:pdatainii}
	P^{Data}_{ini, i} = \sum_{\substack{0 \leq k_j \leq N_j \\ j=1,...,4}}\prod_{j=1}^{4} \left(\frac{\left(\lambda_j (T_i + T_j) Q_{i, j}\right)^{k_j}}{k_j!}e^{-\lambda_j \left(T_i + T_j\right)}\right),
\end{equation}
which can be simplified using the distributive property
\begin{equation}
	\sum_{k_1 = 0}^{N_1} \sum_{k_2 = 0}^{N_2} \sum_{k_3 = 0}^{N_3} \sum_{k_4 = 0}^{N_4} \prod_{j = 1}^{4} \mathcal{G}_j(k_j) = \prod_{j = 1}^{4} \sum_{k_j = 0}^{N_j} \mathcal{G}_j(k_j),
\end{equation}
where $\mathcal{G}_j(k_j)$ equals the content in  parentheses in \eqref{eq:pdatainii},
and the Taylor series for the exponent function
\begin{multline}
	\lim_{N_j \to \infty}\sum_{k_j = 0}^{N_j} \left(\frac{\left(\lambda_j (T_i + T_j) Q_{i, j}\right)^{k_j}}{k_j!}e^{-\lambda_j \left(T_i + T_j\right)}\right) \\
	= e^{\lambda_j (T_i + T_j) Q_{i, j}} \times e^{-\lambda_j \left(T_i + T_j\right)}.
\end{multline}
As a result, we obtain \eqref{eq:theorem}.
\end{proof}
We further approximate the probability of success with this upper bound.
To calculate it, we need to find $Q_{i, j}$, for which we require the distribution of the difference of central frequencies of simultaneously transmitted packets.

\subsection{Frequency Difference Distribution}
\label{sec:prob_freq}
In this Section, we derive the distribution of the central frequency difference of two simultaneously transmitted packets.
For that, we solve a stochastic geometry problem: we consider all possible central frequencies of the packets and evaluate the resulting frequency differences and their corresponding probabilities.

Consider two packets transmitted during the intersecting time intervals at BNs $i$ and $j$. 
For simplicity, we denote them as packet $i$ and packet $j$, their central frequencies as $f_i$ and $f_j$, and the transmitting sensors as sensors $i$ and $j$, respectively.
Also, while the standard selects the central frequency for a packet within the interval $(F_{base} + O_{UL,band}-B_{UL} / 2, F_{base} + O_{UL,band}+ B_{UL} / 2)$, to simplify notation, we denote this interval as $(0, B_{UL})$.

As described in Section~\ref{sec:frequency_selection}, when a sensor chooses the frequency for transmission at BN $i$, it initially calculates a guard band $\omega_i = \Delta_i + \SI{1000}{\Hz}$, which depends on the packet bandwidth.
If two guard bands are wider than the allocated subband, i.e., the packets are ``wide'' with respect to the subband, the central frequency of the sensor's signal equals $\frac{B_{UL}}{2}$, i.e., it is not pseudorandom.
Otherwise, the packet is ``narrow'' with respect to the subband, and the sensor selects the sign $\pm$ of the frequency offset based on the frame parity. Since the frame parity changes for each frame, the probability of selecting $+$ or $-$, i.e., the upper or down half of the subband, equals 0.5. Finally, the sensors select a central frequency within the given half of the subband. The latter is parameterized by the packet Crypto Iter, which is assumed to be distributed uniformly.
Even though, according to the standard, the central frequency is defined in a discrete way (see~\eqref{eq:uplink_channel}), the frequency instability of real sensors' oscillators and the lack of synchronization before the UL transmissions can result in the real central frequency being quite far from the planned one.
For example, with $0.5$ ppm accuracy of the oscillator, the frequency deviation at \SI{868}{\MHz} can reach \SI{434}{\Hz}, while in \SI{51.2}{\kHz} channel the distance between the neighbor frequencies  approximates \SI{100}{\Hz}.
Thus, we model the central frequency as a continuous value distributed uniformly within the $[\omega_i, B_{UL} - \omega_i]$ interval.

\begin{figure}[htb]
\centering
\begin{tikzpicture}[scale=0.6]
	\draw [arrows={-triangle 45}] (0,1) -- (14,1);
	\node at (14.2,  0.8) {$f$};
	\node at (0.0,  0.6) {$0$};
	\draw (0,1) -- (0, 2.5);
	\node at (2.0,  0.6) {$\omega_i$};
	\draw [dashed] (2.0,1) -- (2.0, 2.5);
	\node at (4.0,  0.6) {$\omega_j$};
	\draw [dashed] (4.0,1) -- (4.0, 2.5);
	\node at (9.5,  0.6) {$B_{UL} - \omega_j$};
	\draw [dashed] (9.5,1) -- (9.5, 2.5);
	\node at (11.8,  0.2) {$B_{UL} - \omega_i$};
	\draw [dashed] (11.5,1) -- (11.5, 2.5);
	\node at (13.5,  0.6) {$B_{UL}$};
	\draw (13.5,1) -- (13.5, 2.5);
	\node at (5.0,  0.6) {$f_i$};
	\node at (6.8,  0.6) {$f_j$};
	\node at (5.9,  0.3) {$f_{\delta}$};
	\draw [line width=0.3mm] (4.5, 1) rectangle (5.5, 2.5);
	\draw [line width=0.3mm] (5.3, 1) rectangle (8.3, 2);
	\draw (5.0,1) -- (5.0, 2.5);
	\draw (6.8,1) -- (6.8, 2.0);
	\draw [arrows={triangle 45-triangle 45}] (5.2,0.6) -- (6.6,0.6);
\end{tikzpicture}
\vspace{-2em}
\caption{Frequency Difference}
\label{fig:delta_freq}
\end{figure}
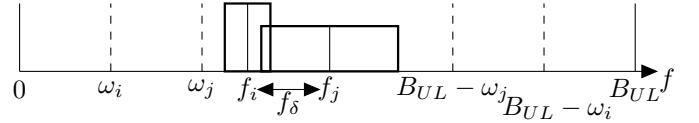

Let $D_{i, j}(f_{\delta})$  be the cumulative distribution function (CDF) of the difference between the central frequencies $f_{\delta} = |f_i - f_j|$ (see Fig.~\ref{fig:delta_freq}).
Depending on whether packets $i$ and $j$ are ``wide'' or ``narrow'', we consider four cases.

In the first case, both packets are ``wide'', i.e., $B_{UL} \le 2\omega_j$ and $B_{UL}\le 2\omega_i$, so they both have the central frequency $\frac{B_{UL}}{2}$:
\begin{equation}
\label{eq:f1}
D_{i, j}(x) = \mathbb{P}(f_{\delta} \leq x ) =
\begin{cases}
	0, & x < 0, \\
	1, & x \geq 0.
\end{cases}
\end{equation}
In the second case, $2\omega_i < B_{UL} \le 2\omega_j$, therefore $f_j = \frac{B_{UL}}{2}$ and $f_i$ is distributed uniformly over $[\omega_i, B_{UL} - \omega_i]$:
\begin{equation}
\label{eq:f2}
D_{i, j}(x) =
\begin{cases}
	0, & x < 0,\\
	\frac{2x}{B_{UL} - 2\omega_i}, & 0 \le x < \frac{B_{UL}}{2} - \omega_i,\\
	1, & x \geq \frac{B_{UL}}{2} - \omega_i.
\end{cases}
\end{equation}
In the third case, $2\omega_j < B_{UL} \le 2\omega_i$, and the result is similar to the second case with $\omega_i$ substituted with $\omega_j$.

In the fourth case, $B_{UL} > 2\omega_j$ and $B_{UL} \ge 2\omega_i$. So, $f_i$ and $f_j$ are distributed uniformly over the corresponding intervals:
\begin{equation}
\label{eq:f4}
D_{i, j}(x) =
\begin{cases}
	0, & \hspace{-2em} x < 0,\\
	\frac{2x}{B_{UL} - 2\min\left\{\omega_i, \omega_j\right\}}, & \hspace{-2em} 0 \le x < |\omega_j - \omega_i|, \\
	\frac{-x^2+2x(B_{UL} - \omega_j - \omega_i) - (\omega_j - \omega_i)^2}{(B_{UL} - 2 \omega_i)(B_{UL} - 2 \omega_j)}, &\\& \hspace{-7em} |\omega_j - \omega_i| \leq x < B_{UL} - \omega_i - \omega_j, \\
	1, & \hspace{-2em} x \geq B_{UL} - \omega_j - \omega_i.
\end{cases}
\end{equation}
Let $\mu_{i, j}(f_{\delta})$ be the probability density function (PDF) corresponding to $D_{i, j}(f_{\delta})$.

\subsection{Interference Probability}
\label{sec:interference}
In this Section, we consider a simultaneous transmission of two sensors' packets. We find the distribution of their powers at the BS which is determined by the distribution of their locations around the BS. 

We assume that the signal power from the sensor at the BS $E(r)$ depends only on the distance $r$ between them.
To find $Q_{i, j}$, we integrate over all possible locations of sensors $i$ and $j$ and over the possible differences between the packets' central frequencies where the SINR is greater than $\nu$:
\begin{multline}
\label{eq:Q_general}
Q_{i,j} = \int\limits_{r_i}\int\limits_{r_j}\rho_i(r_i)\rho_j(r_j)  \\
\times \int\limits_{f_{\delta}} \mathds{1}\left(\frac{E(r_i)}{G_{i, j}(E(r_j),f_{\delta}) + Z_i}  > \nu \right) \mu(f_{\delta}) df_{\delta}d r_j d r_i,
\end{multline}
where $\rho_i(r_i)$ is the PDF of sensor's $i$ distance from the BS, the indicator $\mathds{1}\left(x\right)$ equals $1$ if $x$ holds and $0$, otherwise, $G_{i, j}(E(r_j),f_{\delta})$ is the interference power induced by the packet $j$ at the receiver of the packet $i$, if the packet $j$ has the power $E(r_j)$ and the central frequencies of the packets $i$ and $j$ differ by $f_{\delta}$, $\nu$ is the threshold SINR value in non-dB units required for successful reception of the packet ($10^{0.7}$ or 7 dB in our scenario), and $Z_i = k T \Delta_i$ is the thermal noise in the band $\Delta_i$ of the packet $i$.
To simplify \eqref{eq:Q_general}, we assume that in the frequency domain, the power spectral density (PSD) of an NB-Fi signal is a rectangular function:

\begin{equation}
psd_i(f) = \begin{cases}
	\epsilon_i = \frac{E(r_i)}{\Delta_i}, & f_{c} - \frac{\Delta_i}{2} \leq f \leq f_{c} + \frac{\Delta_i}{2},\\
	0, & otherwise.
\end{cases}
\end{equation}
Then, the interference power equals $G_{i, j}(E(r_2),f_{\delta})=$
\begin{equation}
=
\begin{cases}
	\epsilon_j \min\left\{\Delta_i, \Delta_j\right\}, & f_{\delta} \le \frac{|\Delta_i - \Delta_j|}{2}, \\
	\epsilon_j \left( \frac{ \Delta_i + \Delta_j}{2} - f_{\delta} \right), & \frac{|\Delta_i - \Delta_j|}{2} < f_{\delta} < \frac{ \Delta_i + \Delta_j}{2}, \\
	0, & f_{\delta} \ge \frac{ \Delta_i + \Delta_j}{2}.
\end{cases}
\end{equation}
Here, the first condition corresponds to the case when one packet completely overlaps another packet, the second condition corresponds to the partial overlapping case, and the third condition corresponds to no overlap.
In all the cases, the interference power equals the width of the overlapping frequency interval multiplied by the power spectral density of the interfering packet.

Notably, $G_{i, j}(E(r_2),f_{\delta})$ is a decreasing function of $f_{\delta}$, and for a maximal value of $f_{\delta}$, the indicator in \eqref{eq:Q_general} equals $1$ (assuming that sensors are not assigned bitrates for which the SINR cannot reach \SI{7}{\dB} at the given distance from the BS). 
Let $\phi_{i,j}(r_1, r_2)$ be the minimal frequency for which the indicator equals $1$:
\begin{equation}
\phi_{i,j}(r_1, r_2) =
\begin{cases}
	0,	& \hspace{-6em}	\frac{E_i}{\epsilon_j min(\Delta_i, \Delta_j) + Z_{i}} > \nu, \\
	\frac{\Delta_i + \Delta_j}{2} - \min\left\{0, \frac{E_i(r_i) - Z_{i} \nu}{\epsilon_j \times \nu}\right\},	& \textrm{otherwise},
\end{cases}
\end{equation}
where in the first case, the packet $i$ can be received even when the packets completely intersect.

As a result, the success probability for the packet $i$ if it overlaps packet $j$ equals
\begin{equation}
\label{eq:q_simplified}
Q_{i,j} =
\begin{cases}
	\int\limits_{r_i}\int\limits_{r_j} \rho_i(r_i)\rho_j(r_j) \mathds{1}\left(\frac{E(r_i)}{\epsilon_j min(\Delta_i, \Delta_j) + Z_{i}}  > \nu \right) d r_j d r_i, &\\ \hspace{12em} B_{UL} < 2 \min\left\{\omega_i, \omega_j\right\}, \\
	1 - \int\limits_{r_i}\int\limits_{r_j} \rho_i(r_i)\rho_j(r_j) D_{i, j}\left(\phi_{i, j}(r_i, r_j)\right) d r_j d r_i, &\\ \hspace{12em} B_{UL} \geq 2 \min\left\{\omega_i, \omega_j\right\},
\end{cases}
\end{equation}
where in the first case, both packets $i$ and $j$ are ``wide'', their central frequency is $\frac{B_{UL}}{2}$, and thus they can intersect only entirely, so we integrate over such positions of sensors $i$ and $j$ that the condition inside the indicator holds; in the second case, at least one packet has a pseudo-random central frequency.

To calculate $Q_{i, j}$, we need a specific dependency of the signal power on the distance $E(r)$ and the PDFs of sensors distances.
In our study, we assume that the signal power at the receiver has a log-distance form:
\begin{equation}
\label{eq:propagation}
E(r) = E_T - A - B \log_{10} (r),
\end{equation}
where $A$ and $B$ are constants defined by the specific propagation model.

In the considered scenario, the sensors are distributed uniformly in a circle with a radius $R_1$.
The sensors use the maximal BNs allowed with their signal power which, in its turn, is defined by its distance from the BS.
Thus, the network is divided into concentric rings with  radii $R_{i}$, determined by the power thresholds such that $E(R_i) = S_{i}$ (see Fig.\ref{fig:radii}), and BN $i$ is assigned to all sensors with distance $r$ from the BS such that $R_{i+1} < r \le R_{i}$, where $R_5=0$.

\begin{figure}[htb]
\centering
\begin{tikzpicture}[scale=0.65]
	\draw [thick] (2,4) circle (4cm);
	\draw [thick] (2,4) circle (3cm);
	\draw [thick] (2,4) circle (2cm);
	\draw [thick] (2,4) circle (1cm);
	\draw [arrows={-triangle 45}] (2,4) -- (2.9,4.1);
	\draw [arrows={-triangle 45}] (2,4) -- (2.0,6.0);
	\draw [arrows={-triangle 45}] (2,4) -- (1.7,7.0);
	\draw [arrows={-triangle 45}] (2,4) -- (1.0,7.8);
	\node at (2.3,  3.8) {\scriptsize{$R_4$}};
	\node at (2.7,  5.3) {\scriptsize{$R_3$}};
	\node at (2.5,  6.5) {\scriptsize{$R_2$}};
	\node at (1.9,  7.5) {\scriptsize{$R_1$}};
	\node at (2,  3.3) {\scriptsize{BN $4$}};
	\node at (2,  2.3) {\scriptsize{BN $3$}};
	\node at (2,  1.3) {\scriptsize{BN $2$}};
	\node at (2,  0.3) {\scriptsize{BN $1$}};
\end{tikzpicture}
\caption{Bitrate distribution}
\label{fig:radii}
\end{figure}
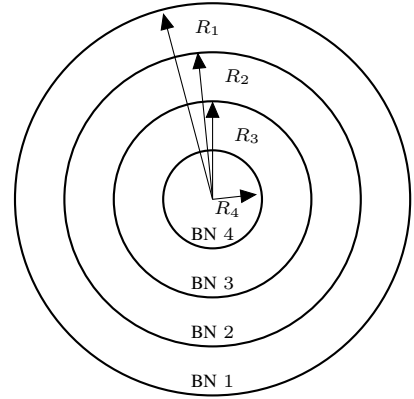

Therefore, the PDF of the sensor's distance to the BS provided that the sensor uses BN $i$ is
\begin{equation}
\rho_i (r) = \frac{2r}{R_{ i}^2 - R_{i+1}^2}.
\end{equation}
As a result, the probability of a sensor to use BN $i$ equals
\begin{equation}
p_i = \frac{R_{ i}^2 - R_{i+1}^2}{R_1^2}.
\end{equation}

With all these values, we can use \eqref{eq:q_simplified} to calculate $Q_{i, j}$ with a numerical integration algorithm, i.e., to integrate over all possible sensor distances $r_i$ and $r_j$ the PDFs of their distances and the indicators within the integrals, and then find $P^{data}_{ini, i}$.

\subsection{Probability of Successful Retry}
\label{sec:retry}
Let us find the success probability $P^{data}_{re, i}$ of a retry.
A retry occurs if a data packet is not delivered because of a collision.
In a general case, a collision may involve multiple packets.
However, under the made assumption that the traffic in the network is not heavy, we consider only the most likely case when a collision involves only two packets. 
Further derivations are done as follows.
First, we find the probability of a collision between packets transmitted at specific BNs.
Second, we consider different kinds of collisions: when both colliding packets are not received by the BS, and when only one packet is unsuccessful, and find the corresponding probabilities of successful data packet retry.
Third, we combine all these probabilities, taking into account the fact that a retry can fail due to a collision with an initial transmission by some other sensor.
Finally, we find PER for retries.

The colliding packets may be transmitted at different BNs.
Note that the probability $P^{cs}_{i, j}$ that packet $i$ collides with another packet transmitted at BN $j$ is proportional to the probability that the other packet is transmitted at BN $j$.
Similarly to \eqref{eq:pdata}: 
\begin{equation}
P^{cs}_{i, j} = \frac{\left(1 - e^{-\lambda_j (T_{i} + T_j)(1 - Q_{i,j})}\right) \prod\limits_{\substack{k = 1 \\ k \neq j}}^{4}e^{- \lambda_k (T_{i} + T_k)(1 - Q_{i,k})}}{\sum\limits_{l = 1}^{4}\left(1 - e^{-\lambda_j (T_{i} + T_l)(1 - Q_{i,l})}\right) \prod\limits_{\substack{k = 1 \\ k \neq l}}^{4} e^{- \lambda_k (T_{i} + T_k)(1 - Q_{i,k})}}.
\end{equation}

If packets $i$ and $j$ collide, the following events are possible.
\paragraph*{Event ``Packet $i$ is lost''} The packet $i$ is lost, while the packet $j$ is received successfully.
The probability of this event equals
\begin{multline}
Q^{one}_{i, j} = \int\limits_{r_i}\int\limits_{r_j} \rho_i(r_i)\rho_j(r_j) \int\limits_{f_{\delta}} \mathds{1} \left(\frac{E(r_i)}{G_{i, j}(E(r_j),f_{\delta}) + Z_i}  < \nu \right. \\
\wedge \left.\frac{E(r_j)}{G_{j, i}(E(r_i),f_{\delta}) + Z_j}  > \nu \right) \mu(f_{\delta}) df_{\delta}d r_j d r_i,
\end{multline}
which can be simplified in the same way as $Q_{i, j}$.
If both packets are ``wide'', i.e., $B_{UL} < 2 \min\left\{\omega_i, \omega_j\right\}$, then we integrate over such locations of sensors $i$ and $j$ that the condition inside the indicator $\mathds{1}()$ holds:
\begin{multline}
Q^{one}_{i, j} = \int\limits_{r_i}\int\limits_{r_j} \rho_i(r_i)\rho_j(r_j) \mathds{1}\left(\frac{E(r_i)}{\epsilon_j min(\Delta_i, \Delta_j) + Z_{i}}  < \nu \right. \\
\wedge \left.\frac{E(r_j)}{\epsilon_i min(\Delta_i, \Delta_j) + Z_{j}} > \nu\right) d r_j d r_i.
\end{multline}
Otherwise, we obtain
\begin{multline}
Q^{one}_{i, j} = \int\limits_{r_i}\int\limits_{r_j} \rho_i(r_i)\rho_j(r_j) \\
\times\max\left\{D_{i, j}(\phi_{i, j}(r_i, r_j)) - D_{i, j}(\phi_{j, i}(r_j, r_i)), 0\right\}d r_j d r_i,
\end{multline}

If only packet $i$ is lost, only the sensor $i$ makes a retry, and the probability of the successful retry equals $P^{data}_{ini, i}$, because during the retransmitted packet $i$ can only collide with other sensors' initial attempts.
The situation is different when both packets are unsuccessful.

\paragraph*{Event  ``Both packets are lost''}
The probability of this event is
\begin{equation}
Q^{both}_{i, j} = 1 - Q_{i, j} - Q^{one}_{i, j}.
\end{equation}
When packets $i$ and $j$ are lost, the sensors $i$ and $j$ make retries, and the collision probability during such retries differs significantly from that during the initial transmission attempt because a Poisson flow does not describe the retries. Let us find the probability of repetitive collision of two packets, i.e., the event when the retries intersect both in time and frequency.

Let $t = 0$ be the time when a half of the packet $i$ is transmitted (see Fig.~\ref{fig:retry}), and $x \in [-\frac{T^{Data}_{i}+T^{Data}_{j}}{2}, \frac{T^{Data}_{i}+T^{Data}_{j}}{2}]$ be the time when a half of the packet $j$ is transmitted.
As the packets $i$ and $j$ collide, the sensors make retries after a random delay.
Let $y$ and $z$ be the times when halves of packets $i$ and $j$, are retransmitted.
The time $y$ is distributed uniformly over the interval $[W_{min, i}, W_{max, i}]$, where $W_{min, i} = T_{delay, i} + T_{listen, i}$ and $W_{max, i} = W_{min, i} + T_{rnd, i}$,
$T_{delay, i}$, $T_{listen, i}$ and $T_{rnd, i}$ are defined in Table~\ref{tab:mcs_values}.
The time $z$ is distributed uniformly over the interval $[x + W_{min, j}, x + W_{max, j}]$.

A new collision happens when a packet intersects with another packet in time, the indicator of such an event being
\begin{multline}
\mathcal{I}_{i, j}\left(y,z\right) = \mathds{1}\left(y \le z \le y + \frac{T^{Data}_{i}+T^{Data}_{j}}{2}\right) + \\
+ \mathds{1}\left(z \le y \le z + \frac{T^{Data}_{i}+T^{Data}_{j}}{2} \right).
\end{multline}
With this indicator, we obtain the probability of the repetitive intersection of two packets in time:
\begin{multline}
P_{i, j}^{int} = \int\limits_{-\frac{T^{Data}_{i}+T^{Data}_{j}}{2}}^{\frac{T^{Data}_{i}+T^{Data}_{j}}{2}} \frac{1}{T^{Data}_{i}+T^{Data}_{j}} \times \\
\int\limits_{W_{min,i}}^{W_{max,i}} \int\limits_{x+W_{min,j}}^{x+W_{max,j}}\hspace{-2em} \frac{\mathcal{I}_{i, j}\left(y,z\right)}{(W_{max,i} - W_{min,i} )(W_{max,j} - W_{min,j})} dzdydx,
\end{multline}
where we integrate $\mathcal{I}_{i, j}\left(y,z\right)$ over all possible values of $x$, $y$ and $z$, taking into account their distributions.

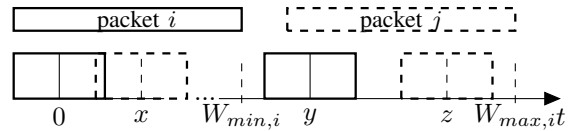
\begin{figure}[htb]
\centering
\begin{tikzpicture}[scale=0.6]
	\draw (0,1) -- (3.3,1);
	\draw [arrows={-triangle 45}] (4.5,1) -- (12,1);
	\node at (12,  0.6) {$t$};
	\node at (1.0,  0.6) {$0$};
	\node at (2.8,  0.6) {$x$};
	\node at (4.2,  1.0) {$...$};
	\node at (5.0,  0.6) {$W_{min, i}$};
	\node at (6.5,  0.6) {$y$};
	\node at (9.5,  0.6) {$z$};
	\node at (11.0,  0.6) {$W_{max, i}$};
	\draw [line width=0.3mm] (0, 1) rectangle (2.0, 2);
	\draw (1.0,1) -- (1.0, 2.0);
	\draw [dashed] [line width=0.3mm] (1.8, 1) rectangle (3.8, 2);
	\draw [dashed] (2.8,1) -- (2.8, 2.0);
	\draw [line width=0.3mm] (0.0, 2.5) rectangle (5, 3.0);
	\node at (2.7,  2.7) {\small{packet $i$}};
	\draw [dashed, line width=0.3mm] (6.0, 2.5) rectangle (11, 3.0);
	\node at (8.5,  2.7) {\small{packet $j$}};
	\draw [line width=0.3mm] (5.5, 1) rectangle (7.5, 2);
	\draw (6.5,1) -- (6.5, 2.0);
	\draw [dashed] [line width=0.3mm] (8.5, 1) rectangle (10.5, 2);
	\draw [dashed] (9.5,1) -- (9.5, 2.0);
	\draw [dashed] (11.0,  0.9) -- (11.0, 1.8);
	\draw [dashed] (5.0,  0.9) -- (5.0, 1.8);
\end{tikzpicture}
\caption{Retransmission}
\label{fig:retry}
\end{figure}

Let us consider the intersection of retries in the frequency domain.
We cannot directly use the frequency difference distribution found in Section \ref{sec:prob_freq} because, according to \eqref{eq:f_uplink}, a sensor chooses the upper or down half of the subband based on the frame parity, which does not change for retransmissions.
Thus, if two packets collide, they use the same half of the subband, and a new collision is more likely to happen.

So, we modify the approach from Section \ref{sec:prob_freq} by considering that during a retry the central frequency $f_i$ is bound to the interval $[\frac{B_{UL}}{2}, B_{UL} - \omega_i]$.
Again, if both packets are ``wide'', i.e., $B_{UL} \le min[\omega_i; \omega_j]$, then the central frequency equals $\frac{B_{UL}}{2}$ and the frequency difference CDF $	D_{i, j}^{re}(x)$ is the same as in \eqref{eq:f1}.

If $\omega_i < B_{UL} \le \omega_j$, $f_j = \frac{B_{UL}}{2}$, and $f_i$ is distributed uniformly in $[\frac{B_{UL}}{2}, B_{UL} - \omega_i]$.
In this case, the frequency difference has the CDF 	$D_{i, j}^{re}(x)$, which is the same as in \eqref{eq:f2}:

If $\omega_j < B_{UL} \le \omega_i$, the CDF is as in \eqref{eq:f2} with $\omega_i$ substituted by $\omega_j$.

If $B_{UL} > \max\left\{2\omega_j, 2\omega_i\right\}$, both $f_i$ and $f_j$ are distributed uniformly over the corresponding intervals, and $D_{i, j}^{re}(x) =$
\begin{equation}
=
\begin{cases}
	0, & \hspace{-1em}x \le 0\\
	\frac{2x\left( 2B_{UL} - 4\omega_j - x \right) }{\left( B_{UL} - 2\omega_i \right) \left(  B_{UL} - 2\omega_j \right) },    &\hspace{-1em} 0 < x \le \omega_j - \omega_i\\
	\frac{4(x( B_{UL} - \omega_i - \omega_j ) -x^2 - \frac{1}{2}(\omega_j - \omega_i)^2 ) }{\left( B_{UL} - 2\omega_i \right) \left(  B_{UL} - 2\omega_j \right)}, &\hspace{-1em} \omega_j - \omega_i
	< x \le \frac{B_{UL}}{2} - \omega_j \\
	1 - \frac{\frac{1}{2}(\omega_j - \omega_i)^2   }{\left( B_{UL} - 2\omega_i \right) \left(  B_{UL} - 2\omega_j \right)},
	&\hspace{-2em} \frac{B_{UL}}{2} - \omega_j < x \le \frac{B_{UL}}{2} - \omega_i \\
	1,                              &\hspace{-1em} x > \frac{B_{UL}}{2} - \omega_i
\end{cases}
\end{equation}

Finally, we obtain the success probability of the retransmission of packet $i$, provided that it overlaps in time with packet $j$ similarly to \eqref{eq:q_simplified}:
\begin{equation}
Q^{Re}_{i,j} =
\begin{cases}
	0, & \hspace{-11em} B_{UL} < 2 \min\left\{\omega_i, \omega_j\right\}, \\
	\\
	1 - \frac{\int_{r_i}\int_{r_j} \rho_i(r_i)\rho_j(r_j) D^{Re}_{i, j}\left(\phi_{i, j}(r_i, r_j)\right) d r_j d r_i}{1 - \int_{r_i}\int_{r_j} \rho_i(r_i)\rho_j(r_j) \mathds{1}\left(\frac{E(r_i)}{\epsilon_j min(\Delta_i, \Delta_j) + Z_{i}}  > \nu \right) d r_j d r_i}, &
	\\ \hspace{11em} B_{UL} \geq 2 \min\left\{\omega_i, \omega_j\right\}.
\end{cases}
\end{equation}
Here, the first line stands for the success probability if both packets are ``wide''. In this case, they inevitably will use the same frequency for retransmission.
The second line stands for the case when the central frequency of at least one packet is spread uniformly.
This equation differs from \eqref{eq:q_simplified} in the following way.
First, we write down a conditional success probability, provided that the signal strengths make the packet $i$ to be unsuccessful if it overlaps with the packet $j$. Thus, the denominator gives the probability that this condition holds. Second, we use  $D_{i, j}^{re}(x)$ instead of $D_{i, j}(x)$ in the numerator of the fraction.

Now we can combine the obtained results to find the success probability $P^{Data}_{i,Re}$ for a retransmission of the packet $i$:
\begin{equation}
P^{Data}_{i,Re} = \sum_{j = 1}^{4} P^{cs}_{i, j} \frac{Q_{i,j}^{one} + Q^{both}_{i}\left(1-\left(1 - Q_{i,j}^{re} \right) P_i^{int}\right)}{1-Q_{i, j}}P^{Data}_{ini, i},
\end{equation}
where we sum over the possible BNs of the colliding packet.
Inside the sum, we multiply the probability $P^{cs}_{i, j}$ that a collision happens with the packet $j$ by a sum of two probabilities: the probability of collision that results in a retry of only packet $i$, and the probability of collision that results in a retry of both packets $i$ and $j$.
In the case of only one packet being retransmitted, the probability of successful transmission is $P^{Data}_{ini, i}$.
Otherwise, we also have to multiply this probability by the probability of either packet not intersecting again in time or their frequencies and powers to be such that the retry is successful.
We also divide the probabilities by $1 - Q_{i,j}$ which is the probability of condition that the powers of the packets were such that packet $i$ was not successful (and thus the retry is required).

In the end, we find the success probability of a retry as the success probability of the data packet multiplied by the success probability of the corresponding acknowledgment delivery:
\begin{equation}
\label{eq:psre}
P^S_{i, Re} = P^{Data}_{i,Re} P^{Ack}_{i, Re},
\end{equation}
As with $P^{Ack}_{i, ini}$, we assume that $P^{Ack}_{i, Re} = 1$.
To find the average PER for retries, we take into account the fact that the rate of retries is proportional not only to $p_i$, but also to the failure probability of the initial transmission attempt (i.e., the probability of the retry to happen):
\begin{equation}
\label{eq:perre}
PER_{Re} = 1 - \sum_{i = 1}^{4} p_i \frac{1 - P^{S}_{ini, i}}{\sum_{j = 1}^{4} (1 - P^{S}_{ini, j})} P^{S}_{ini, re}.
\end{equation}

\subsection{Packet Loss Rate}
\label{sec:plr}
Given the probability that the initial transmission is successful and the retry is successful, we can find the PLR.
A packet is lost when it makes $RL$ unsuccessful transmission attempts or when a sensor makes an unsuccessful transmission attempt, and a new packet is generated while the sensor is transmitting the packet or is waiting for the $T_{listen}$ interval.
In the latter case, packets can also be lost if several packets are generated while waiting for packet transmission: in this case, only the most recently generated packet is transmitted, while the others are discarded.
However, we assume that the traffic in the network has low intensity, and such a case is improbable.

For the initial transmission attempt, the probability of a packet being dropped because of the arrival of a new packet equals
\begin{equation}
P_{ini, i}^{G} = e^{-\frac{\lambda}{N} W_{min,i}},
\end{equation}
which is the probability that the sensor does not generate a packet during the $W_{min, i}$ time starting with its packet transmission.
For retries, we also have to consider that the sensor can generate a packet during the random delay time before its transmission, which yields the following loss probability:
\begin{multline}
P_{Re, i}^{G} = \frac{1}{W_{max,i} - W_{min,i}} \int_{W_{min,i}}^{W_{max,i}} e^{-\frac{\lambda}{N} x} dx =\\
= \frac{N}{(W_{max,i} - W_{min,i})\lambda} \left(e^{-\frac{\lambda}{N} W_{min, i}} - e^{-\frac{\lambda}{N} W_{max, i}}\right).
\end{multline}

Combining these probabilities with \eqref{eq:psfirst} and \eqref{eq:psre}, we obtain
\begin{multline}
PLR_i = 1 - P^{S}_{i, ini} - \left(1 - P^{S}_{i, ini}\right) P^S_{i, Re} P_{ini, i}^{G} \times\\
\times \sum_{r = 0}^{RL - 1} \left[\left(1 - P^S_{i, Re}\right) P_{ini, i}^{G}\right]^r,
\end{multline}
where we subtract from $1$ the success probability of packet delivery, which includes the success probability of the initial transmission attempt, the probability of the failed initial transmission attempt but successful delivery at the $r^{th}$ retry, and the packet not being dropped after each transmission failure.
Finally, we average the PLR over all BNs to obtain the average PLR of a sensor:
\begin{equation}
\label{eq:plr}
PLR = \sum_{i = 1}^4 p_i PLR_i.
\end{equation}

\subsection{Average Delay}
\label{sec:delay}
We can also use the obtained probabilities to find the average delay of a delivered packet.
If a packet is delivered after a successful transmission attempt, its delay equals
\begin{equation}
D_{S, i} = T^{Data}_i + T^{Ack}_i,
\end{equation}
which is just the duration of the data packet and the corresponding acknowledgment.
Each retry increases the delay by
\begin{equation}
D_{Re, i} = W_{min, i} + \frac{T_{Rnd, i}}{2},
\end{equation}
which is the time when the sensor waits for an acknowledgment (which does not arrive) and the average backoff.
Combining these values, we obtain the average delay:
\begin{multline}
Delay_i = D_{S, i} + \left(1 - P^{S}_{i, ini}\right) P^S_{i, Re} P_{ini, i}^{G}\times\\
\times \sum_{r = 0}^{RL - 1} \left(r + 1\right) D_{Re, i} \left[\left(1 - P^S_{i, Re}\right) P_{ini, i}^{G}\right]^r.
\end{multline}
Finally, since we measure the delay only for delivered packets, we average the delay over all BNs proportionally to the delivery ratio:
\begin{equation}
\label{eq:delay}
Delay = \sum_{i = 1}^4 \frac{p_i \left(1 - PLR_i\right)}{1 - PLR} Delay_i.
\end{equation}

\subsection{Optimization Problem}
\label{sec:optimization}
Although not explicitly written, \eqref{eq:plr} and \eqref{eq:delay} define the average PLR and delay as functions of the radii $R_{ i}$.
Let $\vec{R}_{max}$ be the vector of $R_{ i}$ values.
Thus, the problem stated in Section \ref{sec:problem} can be rewritten as follows:
\begin{equation}
\label{eq:opt_plr}
\begin{split}
	\min_{\vec{R}_{max}} & \quad \mathcal{F}(\vec{R}_{max}),\\
	s.t. \quad R_{ i+1} &\leq R_{ i}, i = 1, 2, 3,\\
	\quad R_{ i} &\leq R^*_{ i}, i = 1, 2, 3, 4\\
\end{split}
\end{equation}
where $\mathcal{F}$ is either $PLR$ or $Delay$ and  $R^*_{ i}$ are given in Table ~\ref{tab:mcs_values}.
This value is found by solving the equation $\frac{E(R_{ i})}{Z_i} = \nu$.
This problem is solved numerically by searching over the possible $\vec{R}_{max} \in [0, R]^4$ values taking into account the limits of these values specified in the optimization problem~\eqref{eq:opt_plr}.

\section{Numerical Results}
\label{sec:results}
We evaluate the performance of NB-Fi networks using the developed mathematical model and the simulation.
For that, we have developed a discrete-event simulator that implements the scenario described in Section \ref{sec:problem} and does not introduce many assumptions of the mathematical model. For example, it does not assume independence of packet intersections as in Theorem~\ref{th:pdata}, neglect the influence of retries on first transmission attempts, or limit the number of colliding packets to two.
Thus, it can be used to validate the developed mathematical model.

All sensors transmit their signals with power $E_T = 14$~dBm. 
We use the Okumura-Hata model \cite{hata1980empirical} to evaluate signal propagation.
Thus the constants in \eqref{eq:propagation} are $A = 69.55 - 26.16 \log_{10} (f) + 13.82 \log_{10} (h_B) + 3.2(\log_{10} (11.75h_M))^2 - 4.97$ and $B = 44.9 - 6.55\log_{10}(h_B)$, where $h_B = \SI{30}{\m}$ and $h_M = \SI{1}{\m}$ are the heights of the BS and the sensor antennas, and $f$ is the transmission frequency (MHz).

We set the SINR threshold for successful frame reception to \SI{7}{\dB}, where \SI{2}{\dB} is the BS noise factor, and \SI{5}{\dB} is the $SNR$ required to achieve reliable transmission. 

We consider scenarios with different $R_1$ (see Section~\ref{sec:problem}). First, in the \emph{Small Circle Scenario} $R_1 = 1$~km and, thus, all sensors can potentially use any bitrate. Second, in the \emph{Average Circle Scenario} $R_1 = 5$~km, and, therefore, only sensors close to the BS can use BNs 3 and 4. Third, in the \emph{Big Circle Scenario} $R_1 = 7$~km, thus, sensors that are far from the BS can use only BN 1, while BNs 2, 3, and 4 are available only to the sensors close enough to the BS.

\subsection{Small Circle Scenario, Validation}
Consider the Small Circle Scenario.
Let the band of \SI{51.2}{\kHz} be allocated for all UL transmissions.

\begin{figure}[tb]
\centering
\includegraphics[width=0.9\linewidth]{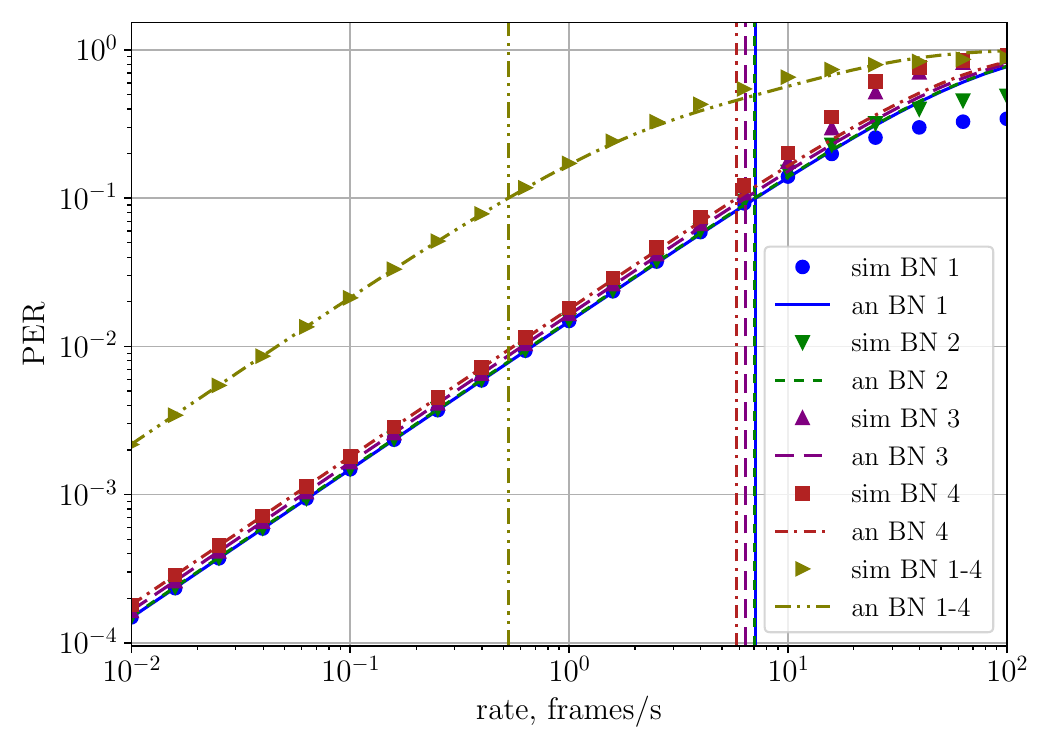}
\includegraphics[width=0.9\linewidth]{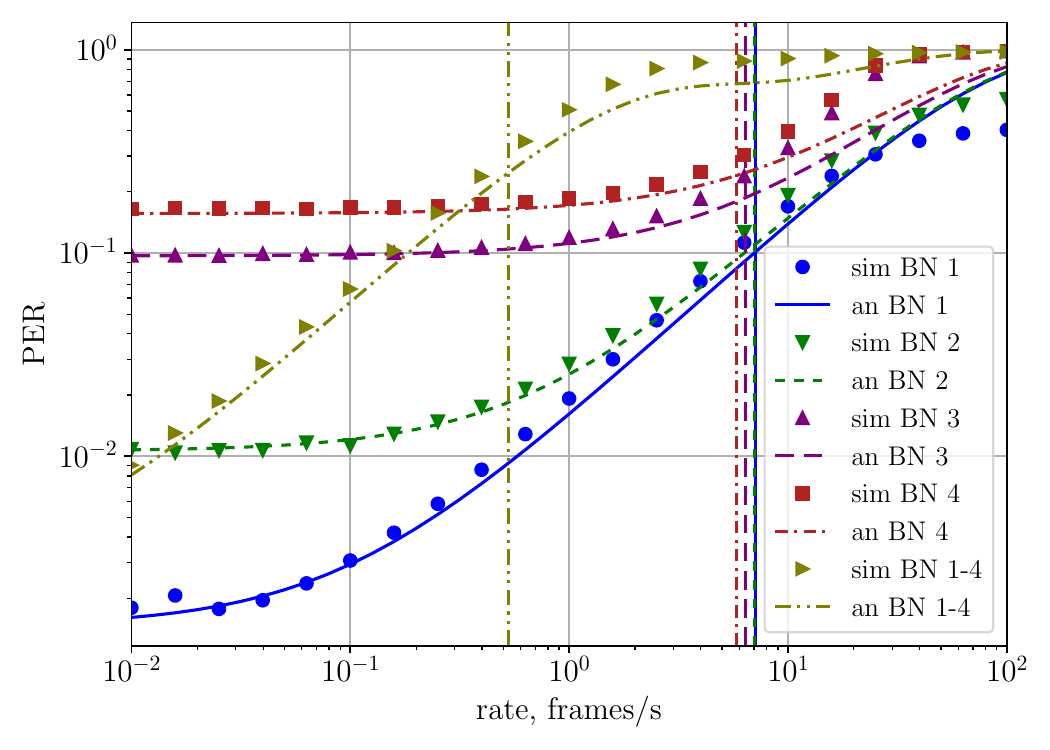}
\caption{Dependency of PER on traffic rate for initial transmission attempts (above) and retries (below), $R = \SI{1}{\km}$. Vertical lines show the accuracy bounds of the model.}
\label{fig:r_1_per}
\vspace{-1em}
\end{figure}

We start with the validation of the developed mathematical model.
Figure~\ref{fig:r_1_per} shows how PER for the initial transmission attempt and for the retries depends on the traffic rate when all sensors in the network use the same bitrate, 
or the radii of the rings make BNs distributed evenly: $p_i = \frac{1}{4}$ 
(labeled as ``BN 1--4'').
Here and further, for each value of $\lambda$, we make 100 simulation runs and each run lasts $10^7 / \lambda$ seconds.
Thus, on average, each sensor generates $10^4$ packets per run.
As we can see, the developed mathematical model is rather accurate for low traffic rates, while for high traffic rates ($\lambda > 10$ fps), the results of the mathematical model diverge from the simulation, because the assumptions of the model about the prevalence of initial transmission attempts and rare retries do not hold.
At the same time, for such a high traffic rate, PER is extremely high, which is hardly relevant to the scenario of interest.

We notice that the model is rather accurate for rates less than $\lambda^*$, which is found as the root of the equation
\begin{equation}
PER_{ini} (\lambda^*) = PER_{bound},
\end{equation}
where $PER_{bound} = 10^{-1}$ is an empiric PER bound: for PER values less than $PER_{bound}$, most packets are delivered at the initial attempt.
The accuracy bounds are shown on all plots with vertical lines for each bitrate assignment case.
Also note that the duty cycle is below 1\%, which is a typical limit for ISM bands, for all $\lambda$ such that the model is accurate.

Figure ~\ref{fig:r_1_per} shows that when all devices in the network use the same bitrate and the traffic has low intensity, the PER is approximately the same for all the bitrates, while with the mixed assignment, the PER is much higher.
Such an effect occurs because PER depends on the probability of packets intersecting in the frequency and time domains.
When devices use the same bitrate, they either transmit ``long narrow'' packets or ``short wide'' ones, which reduces intersection probability compared with the mixed assignment when ``long packets'' may intersect ``wide'', increasing PER.

Note that PER for the initial transmission attempt is much lower than for retries and also that PER during retries significantly increases with bitrate.
The latter issue could be fixed by increasing the retry time window, i.e., $T_{listen}$ and $T_{rnd}$.

\begin{figure}[tb]
\centering
\includegraphics[width=0.9\linewidth]{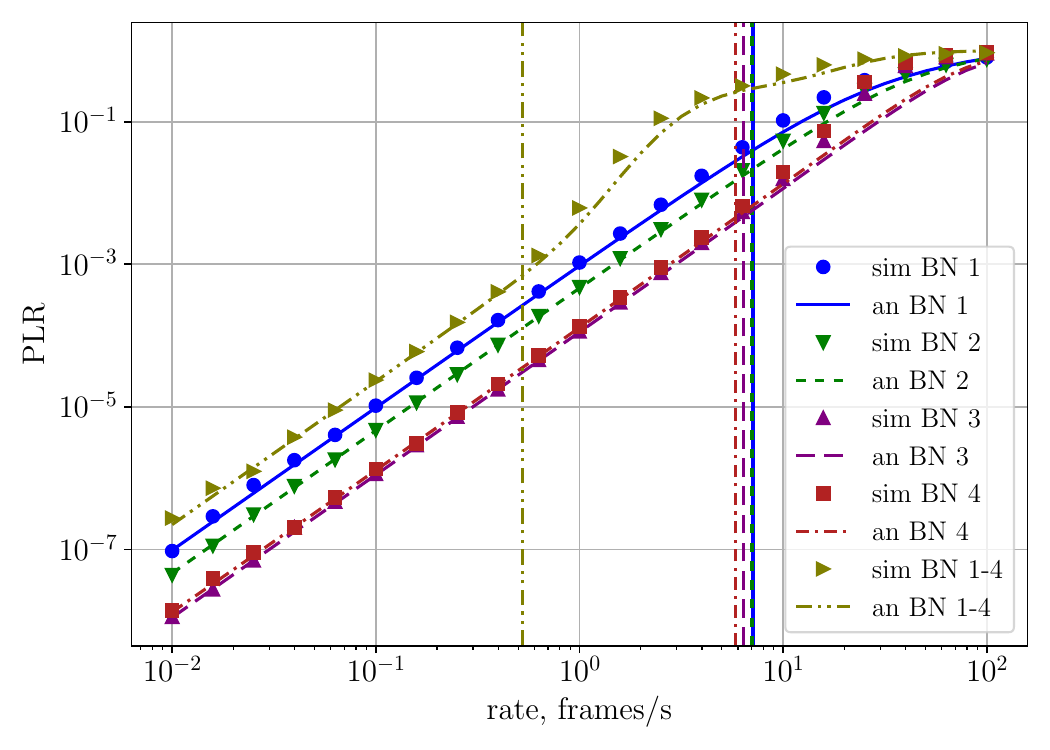}
\caption{Dependency of PLR on the traffic rate, $R = \SI{1}{\km}$. Vertical lines show the accuracy bounds of the model.}
\label{fig:r_1_plr}
\end{figure}

\subsection{Small Circle Scenario: PLR and Delay}
Figure~\ref{fig:r_1_plr} shows how PLR depends on the traffic rate.
Similar to Fig.~\ref{fig:r_1_per}, the PLR is the highest one for the mixed assignment because of the higher collision probability. In contrast to PER, PLR visibly differs for the two lowest bitrates, which have high values of the retry timings $T_{listen}$ and $T_{rnd}$.
A higher retry delay yields a higher probability of a new packet generation during the transmission attempt, which results in packet loss in case of an unsuccessful transmission attempt.

\begin{figure}[tb]
\centering
\includegraphics[width=0.9\linewidth]{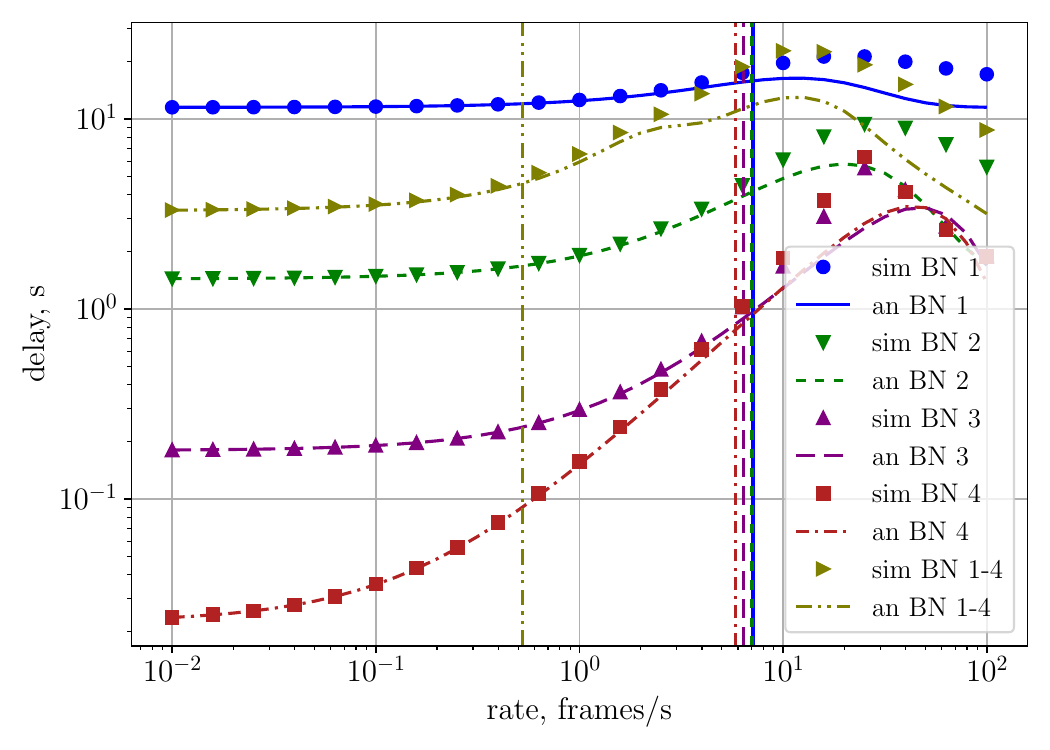}
\caption{Dependency of the average delay on the traffic rate, $R = \SI{1}{\km}$. Vertical lines show the accuracy bounds of the model.}
\label{fig:r_1_delay}
\end{figure}

Figure~\ref{fig:r_1_delay} shows how the average delay depends on the traffic rate.
For low traffic rates, the delays are impacted mainly by the duration of the data packet and acknowledgment and are inversely proportional to bitrates, while for higher traffic rates, the retry probability and retry intervals gain importance.
Thus, for the high traffic rate ($\lambda>1$ fps), the delays for BN 3 and 4 become close to each other. 

The average delays for the delivered packets have a maximal value at some high traffic rate, after which they decrease because, at a very high traffic rate, most packets are dropped. In contrast, the not dropped ones are often delivered with the only transmission attempt. 
Nevertheless, this part of the plot is out of the area of interest because it corresponds to PLR close to $1$, while for the remaining area, we see that to minimize the average delay, it is sufficient to use the highest bitrate.

\begin{figure}[tb]
\centering
\includegraphics[width=0.9\linewidth]{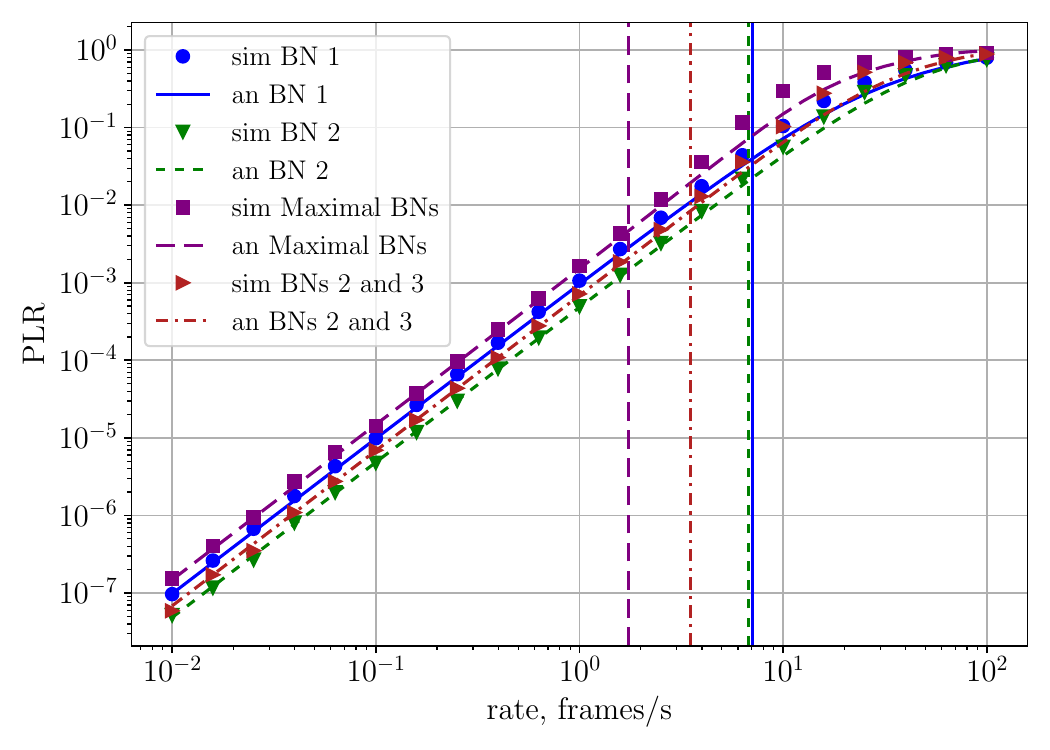}
\caption{Dependency of PLR on the traffic rate, $R = \SI{5}{\km}$. Vertical lines show the accuracy bounds of the model.}
\label{fig:r_5_plr}
\end{figure}

\subsection{Average Circle Scenario, PLR, and Delay}
Let us increase the radius of the area up to $R_1 = \SI{5}{\km}$ and consider the Average Circle Scenario. 
Figure~\ref{fig:r_5_plr} shows the dependency of PLR on the traffic rate for this scenario. 
Notably, when the sensors use the maximal possible BN according to the standardized approach described in Appendix~\ref{sec:rate_control}, the PLR is the maximal one, while using only BN 2 provides the best PLR. Even allowing some sensors close to the BS to use BN 3 (curves ``BNs 2 and 3'') only worsens the performance. 
It means that replacing the standardized algorithm with the centralized assignment of the same bitrate for all sensors improves PLR. 

\begin{figure}[tb]
\centering
\includegraphics[width=0.9\linewidth]{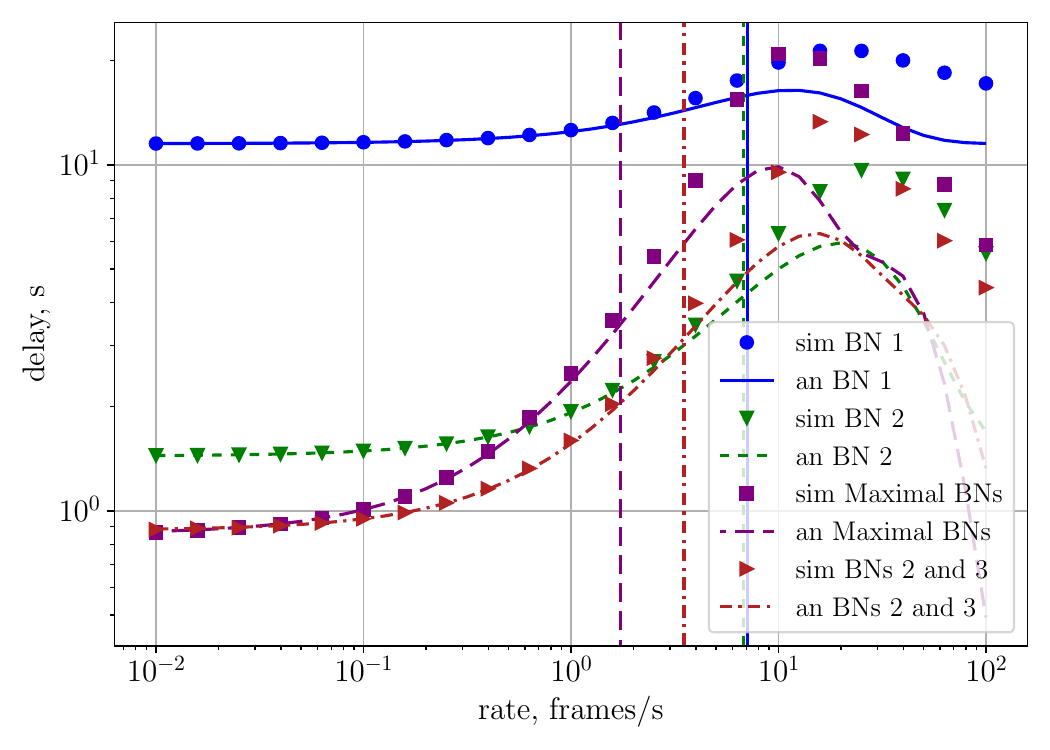}
\caption{Dependency of delay on the traffic rate, $R = \SI{5}{\km}$. Vertical lines show the accuracy bounds of the model.}
\label{fig:r_5_delay}
\end{figure}

At the same time, as Fig.~\ref{fig:r_5_delay} shows, for the average delay, the best strategy is to use a mixture of BN 2 and 3 while limiting the sensors to use only BN 2 increases the average delay because packet duration grows. Moreover, allowing the sensors to select the best bitrate may degrade delay because of more often retries. 

\subsection{Big Circle Scenario, PLR, and Delay}
Let us increase the radius of the area up to $R_1 = \SI{7}{\km}$ and consider the Big Circle Scenario.
Figure~\ref{fig:r_7_plr} shows the dependency of PLR on the traffic rate for this scenario.
As in the Average Circle Scenario, PLR is maximal when sensors use the maximal possible BN.
The best PLR is obtained when all sensors use BN 1, while allowing some sensors to use other BNs worsens the performance.
Thus, we see that to minimize the PLR, we need to assign to all sensors the same BN that is suitable for the edge sensors.

\begin{figure}[tb]
\centering
\includegraphics[width=0.9\linewidth]{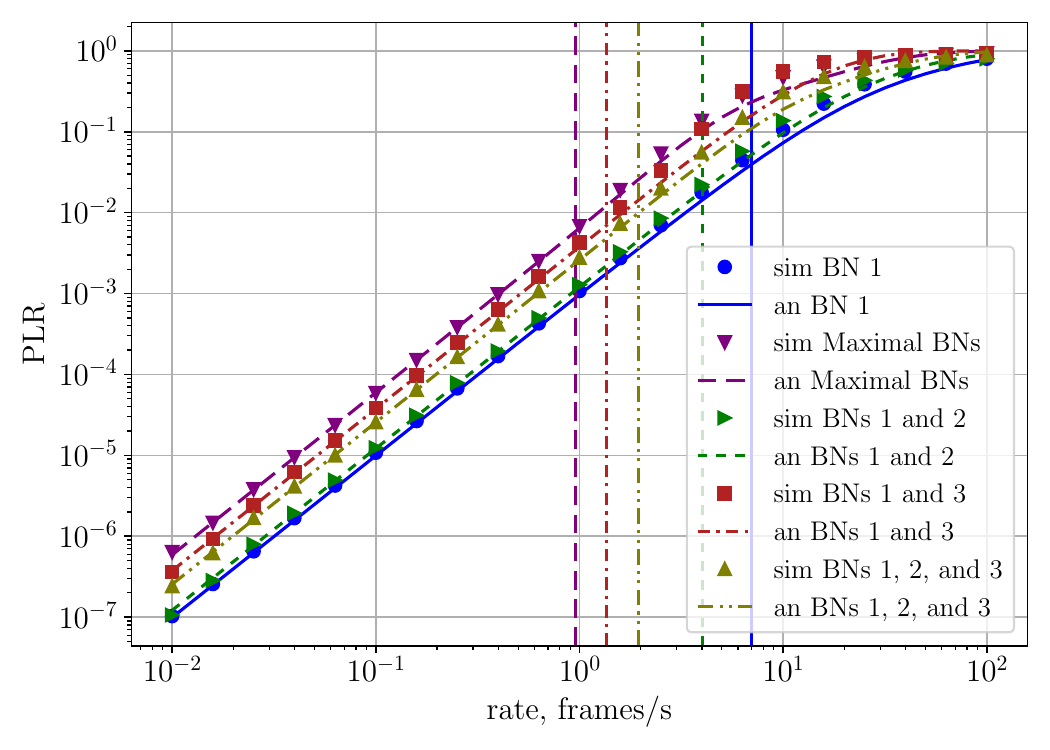}
\caption{Dependency of PLR on the traffic rate, $R = \SI{7}{\km}$. Vertical lines show the accuracy bounds of the model.}
\label{fig:r_7_plr}
\end{figure}

At the same time, this strategy is not the best one for minimizing the average delay.
As Fig.~\ref{fig:r_7_delay} shows, the best strategy for delays depends on the rate.
If the rate is below $\approx 0.2$~fps, then the best strategy is to use a mixture of BNs 1, 2, and 3.
For higher rates, the collisions become much more frequent and it becomes inefficient to use BN 3, so the best strategy is to use a mixture of BNs 1 and 2.
At the same time, making all sensors use the fastest bitrate provides higher delays due to collisions and retries.

\begin{figure}[tb]
\centering
\includegraphics[width=0.9\linewidth]{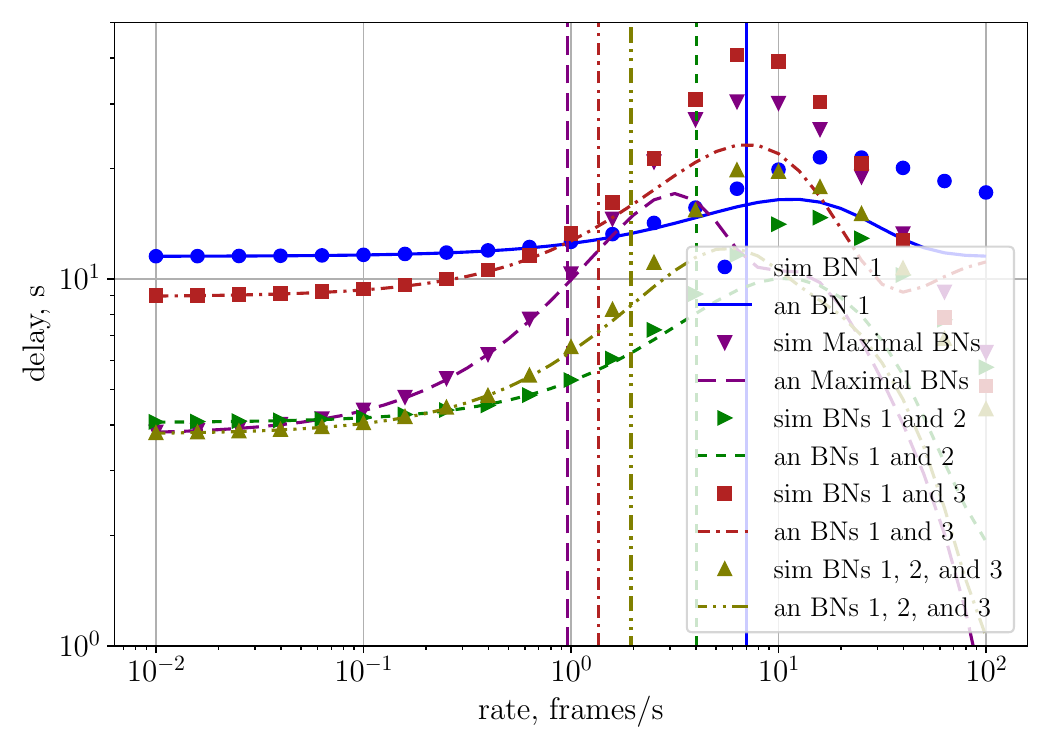}
\caption{Dependency of delay on the traffic rate, $R = \SI{7}{\km}$. Vertical lines show the accuracy bounds of the model.}
\label{fig:r_7_delay}
\end{figure}

\subsection{Guidelines for bitrate assignments}

To sum up, in order to minimize the PLR, the best strategy for all the considered scenarios is to assign the same BN to all the sensors. This BN shall be suitable for the edge sensors.
Allowing some sensors to use different BNs only increases  PLR.
The optimization of the average delay is a more sophisticated task, because the optimal delay depends on the rate.
For low rates, an efficient strategy is to assign the highest bitrates that the sensors can use with their channel conditions, however as the rate increases, it is better not to use fast bitrates because of higher collision probability.
The specific bitrate allocation that minimizes delay for the given rate can be obtained by solving the optimization task described in Section \ref{sec:optimization}.

\section{Conclusion}
\label{sec:conclusion}
In this paper, we have examined the limits of the new NB-Fi protocol, paying special attention to its features related to channel access.
We have analyzed the main features of NB-Fi and developed a mathematical model of data transmission in NB-Fi networks, which can be used to find PER, PLR, and the average delay in NB-Fi networks as functions of traffic rate and the bitrate allocation to the sensors.
Numerical results have shown that the developed model is rather accurate in the scenarios of interest. 
With our model, we have shown that depending on the scenario, the strategy to use the highest possible bitrate corresponding to the standard bitrate selection algorithm is not optimal in terms of PLR or delay, while with the developed model, we can find the optimal bitrate allocation which minimizes PLR or delay.

{\appendices
\section{Frame Format}
\label{sec:frame_format}
Figure \ref{fig:frame_ul} shows the UL frame structure.
It starts with a four-byte long preamble that serves for synchronization.
The preamble is fixed and equals 0x97157A6F.
\begin{figure}[!tb]
	\centering
	\begin{tikzpicture}[scale=0.73]
		\draw [line width=0.3mm]   (0, 0) rectangle   (2.0, 1.6);
		\draw [line width=0.3mm] (2.0, 0) rectangle  (12.0, 1.6);
		\draw [line width=0.3mm] (2.2, 0) rectangle   (4.0, 1.1);
		\draw [line width=0.3mm] (4.0, 0) rectangle   (5.8, 1.1);
		\draw [line width=0.3mm] (5.8, 0) rectangle   (7.8, 1.1);
		\draw [line width=0.3mm] (7.8, 0) rectangle  (9.8, 1.1);
		\draw [line width=0.3mm] (9.8, 0) rectangle (11.8, 1.1);
		\node at (1.0,  1.1) {\scriptsize{Preamble}};
		\node at (1.0,  0.5) {\scriptsize{4 bytes}};
		\node at (3.1,  0.7) {\scriptsize{Modem\_ID}};
		\node at (3.1,  0.3) {\scriptsize{4 bytes}};
		\node at (4.9,  0.7) {\scriptsize{Cripto Iter}};
		\node at (4.9,  0.3) {\scriptsize{1 byte}};
		\node at (6.8,  0.7) {\scriptsize{Payload}};
		\node at (6.8,  0.3) {\scriptsize{9 bytes}};
		\node at (8.8,  0.7) {\scriptsize{MIC0\_7}};
		\node at (8.8,  0.3) {\scriptsize{3 bytes}};
		\node at (10.8, 0.7) {\scriptsize{Packet CRC}};
		\node at (10.8, 0.3) {\scriptsize{3 bytes}};
		\node at (6.5,  1.3) {\scriptsize{Encoded with Error Correction Code, 32 bytes}};
	\end{tikzpicture}
	\caption{Uplink frame structure}
	\label{fig:frame_ul}
\end{figure}
\begin{figure}[!tb]
	\centering
	\begin{tikzpicture}[scale=0.73]
		\draw [dashed] (2.0, 0) rectangle  (9.4, 1.6);
		\draw [line width=0.3mm] (0.0, 0) rectangle   (2.0, 1.1);
		\draw [line width=0.3mm] (2.0, 0) rectangle   (3.8, 1.1);
		\draw [line width=0.3mm] (3.8, 0) rectangle   (5.6, 1.1);
		\draw [line width=0.3mm] (5.6, 0) rectangle   (7.4, 1.1);
		\draw [line width=0.3mm] (7.4, 0) rectangle  (9.4, 1.1);
		\draw [line width=0.3mm] (9.4, 0) rectangle (12.0, 1.1);
		\node at (1.0,  0.7) {\scriptsize{Preamble}};
		\node at (1.0,  0.3) {\scriptsize{4 bytes}};
		\node at (2.9,  0.7) {\scriptsize{Crypto Iter}};
		\node at (2.9,  0.3) {\scriptsize{1 byte}};
		\node at (4.7,  0.7) {\scriptsize{Payload}};
		\node at (4.7,  0.3) {\scriptsize{9 bytes}};
		\node at (6.5,  0.7) {\scriptsize{MIC0\_7}};
		\node at (6.5,  0.3) {\scriptsize{3 bytes}};
		\node at (8.4,  0.7) {\scriptsize{Packet CRC}};
		\node at (8.4,  0.3) {\scriptsize{3 bytes}};
		\node at (10.7, 0.7) {\scriptsize{ECC Check Bits}};
		\node at (10.7, 0.3) {\scriptsize{16 bytes}};
		\node at (5.8,  1.3) {\scriptsize{Error Correction Code (ECC) Input}};
	\end{tikzpicture}
	\caption{Downlink frame structure}
	\label{fig:frame_dl}
\end{figure}

The preamble is followed by data encoded with an error correction code, which by default is a convolutional code with a code rate $\frac{5}{8}$.
Devices can optionally use a polar core \cite{arikan2009channel}.
When a BS receives a frame, it tries to decode both codes and accepts the one that yields a valid CRC.

The encoder input consists of the following fields.

\paragraph{Modem\_ID} a 32-bit long sensor identifier.

\paragraph{Crypto Iter} eight least significant bits of the cryptographic iterator, which is a 32-bit counter stored at the transmitter and incremented by one for every transmitted frame.
NB-Fi devices use the Magma~\cite{rfc8891} symmetric key block cipher to produce security keys, encrypt the payload, and compute the message authentication code (MAC).
The basic procedure of the Magma cipher is denoted as $f_M(K, iv, p)$, where $K$ is the key, $iv$ is the initial vector, and $p$ is the plain text.
Every NB-Fi sensor has a root key $K_{root}$ installed during the production and registered at the server.
During initialization, the sensor uses the root key to generate a master key $K_{master}$ as $f_M(K_{root}, C_1, C_2)$, where $C_1$ and $C_2$ are standardized constants.
Similarly, it uses $K_{master}$ (and a different set of constants) to generate a key for data encryption $K_{data}$ and a key for MAC calculation $K_{mac}$.
Whenever Crypto Iter reaches zero (i.e., every 256 frames), the sensor generates a new master key based on the current $K_{master}$ and then uses it to generate new $K_{data}$ and $K_{mac}$.
Devices also use the cryptoiterator as the initial vector for payload encryption.

\paragraph{Payload} a 9-byte field with an encrypted payload. As the payload length is constant, all UL frames are always 36-byte long. 

\paragraph{$MIC0\_7$} three least significant bytes of the MAC, which is also calculated with the Magma algorithm in the Encrypt-then-MAC mode with the key $K_{mac}$. 
$MIC0\_7$ is used to make sure that the payload has not been changed.
Also, as described further, $MIC0\_7$ is used to calculate the central frequency of the transmission.

\paragraph{Packet CRC} three least significant bytes of the CRC32 checksum for Modem\_ID, Crypto Iter, and Payload.

Figure~\ref{fig:frame_dl} shows the DL frame structure, which is similar to the UL one with the following exceptions.

As opposed to the UL frame, the DL one does not contain Modem\_ID, and the preamble is not fixed but is calculated as a function of Modem\_ID.
Different preambles allow sensors not to react to the frames not intended for them.

Crypto Iter, Payload, Data Authentication Code, and Packet CRC form an input to the Zigzag error correction code \cite{ping2001zigzag} with the code rate of $\frac{1}{2}$.
The Zigzag code is efficient at short codeword lengths, and it has a simple decoding algorithm, which is important for the sensors.
The control bits produced by the Zigzag encoder are appended to the DL frame.

Data encryption in the DL is similar to that in the UL but uses other values for $K_{master}$, $K_{mac}$, and $K_{data}$.

\section{Transport Layer Packet Format}\label{sec:transport_format}
The payload of both UL and DL frames is a transport layer packet (see Fig. \ref{fig:header}), which consists of a header (1 byte) and data (8 bytes).
The header contains flags that specify whether the frame is system (flag SYS), e.g., the acknowledgment frame (flag ACK), whether the recipient shall send an ACK for this and, maybe, previous frames, right after the frame (flag ACK) or later (flag MULTI). Also, the header contains a 5-bit iterator (ITER) used to enumerate frames between acknowledgments.

\begin{figure}[!tb]
	\centering
	\begin{tikzpicture}[scale=0.73]
		\draw [line width=0.3mm] (0.0, 0) rectangle   (2.0, 1.1);
		\draw [line width=0.3mm] (2.0, 0) rectangle   (4.0, 1.1);
		\draw [line width=0.3mm] (4.0, 0) rectangle   (6.0, 1.1);
		\draw [line width=0.3mm] (6.0, 0) rectangle   (8.0, 1.1);
		\draw [line width=0.3mm] (8.0, 0) rectangle  (12.0, 1.1);
		\node at (1.0,  0.7) {\scriptsize{SYS}};
		\node at (1.0,  0.3) {\scriptsize{1 bit}};
		\node at (3.0,  0.7) {\scriptsize{ACK}};
		\node at (3.0,  0.3) {\scriptsize{1 bit}};
		\node at (5.0,  0.7) {\scriptsize{MULTI}};
		\node at (5.0,  0.3) {\scriptsize{1 bit}};
		\node at (7.0,  0.7) {\scriptsize{ITER}};
		\node at (7.0,  0.3) {\scriptsize{5 bits}};
		\node at (10.0,  0.7) {\scriptsize{Data}};
		\node at (10.0,  0.3) {\scriptsize{8 bytes}};
	\end{tikzpicture}
	\caption{Transport layer packet structure}
	\label{fig:header}
\end{figure}

The Data field contains application-layer payload or service information depending on the SYS flag.

\section{Optional Bitrate and Power Selection Algorithm}
\label{sec:rate_control}

Sensors operating in the DRX or CRX mode (see Section \ref{sec:modes}) may continuously evaluate the radio signal quality by estimating DL SNR while receiving frames from the BS. Also, they receive the feedback on UL SNR sent in the $ACK\_P$ frames by the BS. Consequently, the sensors may optionally adapt the bitrate and the transmission power for UL transmissions and request the most appropriate bitrate and transmission power for DL transmissions from the BS to save energy.

According to the standard, the sensors switch to a higher bitrate if the SNR reaches $SNR_{TX/RX} + SNR_{UP}$ for the UL or DL, where $SNR_{TX/RX}$ depends on the bitrate as shown in Table \ref{tab:mcs_values} and  $SNR_{UP} = \SI{15}{\dB}$ gives the minimal gap for which it is reasonable to increment the bitrate. When the sensor increments the bitrate, the bandwidth and the thermal noise increase eight times. Thus, the SNR drops by \SI{9}{\dB}, leaving \SI{6}{\dB} for fading compensation.

For the UL, the bitrate is changed without notifying the server, while for the DL, the sensor sends a service frame containing options of the new bitrate and requests an acknowledgment to this frame.

If the sensor uses the highest bitrate and SNR is high enough ($SNR \geq SNR_{UP} + SNR_{TX/RX}$ for the UL or DL), then the sensor gradually (by \SI{3}{\dB}) decreases its transmission power for UL, or notifies the server to decrease the transmission power on the BS for DL.

If the SNR is below the threshold $SNR_{DOWN} = 10$ dB, then the sensor initially increases the transmission power on the transmitter, and after the highest transmission power is reached, the sensor lowers the bitrate.

}

\bibliographystyle{IEEEtran}
\bibliography{biblio}

\begin{thebibliography}{10}
\providecommand{\url}[1]{#1}
\csname url@samestyle\endcsname
\providecommand{\newblock}{\relax}
\providecommand{\bibinfo}[2]{#2}
\providecommand{\BIBentrySTDinterwordspacing}{\spaceskip=0pt\relax}
\providecommand{\BIBentryALTinterwordstretchfactor}{4}
\providecommand{\BIBentryALTinterwordspacing}{\spaceskip=\fontdimen2\font plus
\BIBentryALTinterwordstretchfactor\fontdimen3\font minus
  \fontdimen4\font\relax}
\providecommand{\BIBforeignlanguage}[2]{{%
\expandafter\ifx\csname l@#1\endcsname\relax
\typeout{** WARNING: IEEEtran.bst: No hyphenation pattern has been}%
\typeout{** loaded for the language `#1'. Using the pattern for}%
\typeout{** the default language instead.}%
\else
\language=\csname l@#1\endcsname
\fi
#2}}
\providecommand{\BIBdecl}{\relax}
\BIBdecl

\bibitem{huang2020freescatter}
Q.~Huang, G.~Song, W.~Wang, H.~Dong, J.~Zhang, and Q.~Zhang, ``{FreeScatter}:
  enabling concurrent backscatter communication using antenna arrays,''
  \emph{IEEE Internet of Things Journal}, vol.~7, no.~8, pp. 7310--7318, 2020.

\bibitem{goursaud2015dedicated}
C.~Goursaud and J.-M. Gorce, ``Dedicated networks for {IoT}: {PHY}/{MAC} state
  of the art and challenges,'' \emph{{EAI} endorsed transactions on {Internet}
  of {Things}}, 2015.

\bibitem{vangelista2015long}
L.~Vangelista, A.~Zanella, and M.~Zorzi, ``Long-range {IoT} technologies: the
  dawn of {LoRa™},'' in \emph{Future access enablers of ubiquitous and
  intelligent infrastructures}.\hskip 1em plus 0.5em minus 0.4em\relax
  Springer, 2015, pp. 51--58.

\bibitem{sigfox_coverage}
\BIBentryALTinterwordspacing
{Sigfox Coverage}. [Online]. Available:
  \url{https://www.sigfox.com/en/coverage}
\BIBentrySTDinterwordspacing

\bibitem{lorawan_coverage}
\BIBentryALTinterwordspacing
{LoRaWAN Coverage \& Operator Maps}. [Online]. Available:
  \url{https://lora-alliance.org/}
\BIBentrySTDinterwordspacing

\bibitem{nbfi_standard}
\BIBentryALTinterwordspacing
\emph{{GOST R 70036-2022}. Information Technology. Internet of things. Wireless
  protocol based on narrow band {RF} modulation ({NB-Fi})}, 2022. [Online].
  Available:
  \url{https://protect.gost.ru/v.aspx?control=8&baseC=6&page=2&month=3&year=2022&search=&id=233227}
\BIBentrySTDinterwordspacing

\bibitem{waviot}
\BIBentryALTinterwordspacing
{WAVIoT --- LPWAN solutions for IoT and M2M}. [Online]. Available:
  \url{https://www.waviot.com/}
\BIBentrySTDinterwordspacing

\bibitem{waviot_russia}
\BIBentryALTinterwordspacing
{Smart electricity meters for flats and houses across all country}. [Online].
  Available:
  \url{https://waviot.com/projects/apartments/8-000-meters-for-smart-flats-and-houses-metering/}
\BIBentrySTDinterwordspacing

\bibitem{waviot_france}
\BIBentryALTinterwordspacing
{WAVIoT Leasia}. [Online]. Available: \url{https://leasia.fr/?page_id=5719}
\BIBentrySTDinterwordspacing

\bibitem{waviot_serbia}
\BIBentryALTinterwordspacing
{WAVIoT Serbia}. [Online]. Available: \url{https://waviot.rs/en/}
\BIBentrySTDinterwordspacing

\bibitem{waviot_india}
\BIBentryALTinterwordspacing
{WaVIoT Distributors}. [Online]. Available: \url{https://waviot.com/buy/}
\BIBentrySTDinterwordspacing

\bibitem{waviot_argentina}
\BIBentryALTinterwordspacing
{Datan Partner Argentina}. [Online]. Available:
  \url{https://www.datan.com.ar/partner/}
\BIBentrySTDinterwordspacing

\bibitem{waviot_moldova}
\BIBentryALTinterwordspacing
{30,000 NB-Fi modems supplied for Advanced Gas Metering project}. [Online].
  Available: \url{https://waviot.com/projects/gas/ami-for-gas-utilities/}
\BIBentrySTDinterwordspacing

\bibitem{waviot_kazahstan}
\BIBentryALTinterwordspacing
{WaVIoT Asia continue installation of Smart meters in the district of "peace
  and harmony"}. [Online]. Available:
  \url{https://waviot.com/news/waviot-peace-and-harmony-smart-meters-detail/}
\BIBentrySTDinterwordspacing

\bibitem{smart_cities2019}
T.~L. Pham, H.~Nguyen, H.~Nguyen, V.~Bui, and Y.~M. Jang, ``Low power wide area
  network technologies for smart cities applications,'' in \emph{2019
  International Conference on Information and Communication Technology
  Convergence (ICTC)}.\hskip 1em plus 0.5em minus 0.4em\relax IEEE, 2019, pp.
  501--505.

\bibitem{petrenko2018iiot}
A.~S. Petrenko, S.~A. Petrenko, K.~A. Makoveichuk, and P.~V. Chetyrbok, ``The
  {IIoT}/{IoT} device control model based on narrow-band {IoT} ({NB-IoT}),'' in
  \emph{2018 IEEE Conference of Russian Young Researchers in Electrical and
  Electronic Engineering (EIConRus)}.\hskip 1em plus 0.5em minus 0.4em\relax
  IEEE, 2018, pp. 950--953.

\bibitem{ikpehai2018low}
A.~Ikpehai, B.~Adebisi, K.~M. Rabie, K.~Anoh, R.~E. Ande, M.~Hammoudeh,
  H.~Gacanin, and U.~M. Mbanaso, ``Low-power wide area network technologies for
  {Internet-of-Things}: A comparative review,'' \emph{IEEE Internet of Things
  Journal}, vol.~6, no.~2, pp. 2225--2240, 2018.

\bibitem{mekki2018overview}
K.~Mekki, E.~Bajic, F.~Chaxel, and F.~Meyer, ``Overview of cellular {LPWAN}
  technologies for {IoT} deployment: {Sigfox}, {LoRaWAN}, and {NB-IoT},'' in
  \emph{2018 IEEE international conference on pervasive computing and
  communications workshops (percom workshops)}.\hskip 1em plus 0.5em minus
  0.4em\relax IEEE, 2018, pp. 197--202.

\bibitem{bankov2016limits}
D.~Bankov, E.~Khorov, and A.~Lyakhov, ``On the limits of {LoRaWAN} channel
  access,'' in \emph{Engineering and Telecommunication (EnT), 2016
  International Conference on}.\hskip 1em plus 0.5em minus 0.4em\relax IEEE,
  2016, pp. 10--14.

\bibitem{bankov2019lorawan}
------, ``{LoRaWAN} modeling and {MCS} allocation to satisfy heterogeneous
  {QoS} requirements,'' \emph{Sensors}, vol.~19, no.~19, p. 4204, 2019.

\bibitem{aloha}
N.~Abramson, ``{THE ALOHA SYSTEM}: Another alternative for computer
  communications,'' in \emph{Proceedings of the November 17-19, 1970, Fall
  Joint Computer Conference}, ser. AFIPS '70 (Fall).\hskip 1em plus 0.5em minus
  0.4em\relax New York, NY, USA: ACM, 1970, pp. 281--285.

\bibitem{proakis2001digital}
J.~G. Proakis and M.~Salehi, \emph{Digital communications}.\hskip 1em plus
  0.5em minus 0.4em\relax McGraw-hill New York, 2001, vol.~4.

\bibitem{hata1980empirical}
M.~Hata, ``Empirical formula for propagation loss in land mobile radio
  services,'' \emph{IEEE transactions on Vehicular Technology}, vol.~29, no.~3,
  pp. 317--325, 1980.

\bibitem{levchenko2022performance}
P.~Levchenko, D.~Bankov, E.~Khorov, and A.~Lyakhov, ``Performance comparison of
  {NB-Fi}, {Sigfox}, and {LoRaWAN},'' \emph{Sensors}, vol.~22, no.~24, p. 9633,
  2022.

\bibitem{li20172d}
Z.~Li, S.~Zozor, J.-M. Brossier, N.~Varsier, and Q.~Lampin, ``2d time-frequency
  interference modelling using stochastic geometry for performance evaluation
  in low-power wide-area networks,'' in \emph{2017 IEEE International
  Conference on Communications (ICC)}.\hskip 1em plus 0.5em minus 0.4em\relax
  IEEE, 2017, pp. 1--7.

\bibitem{bankov2022performance}
D.~Bankov, P.~Levchenko, A.~Lyakhov, and E.~Khorov, ``Performance evaluation of
  channel access in {NB-Fi} networks,'' \emph{Journal of Communications
  Technology and Electronics}, vol.~67, no.~6, pp. 747--754, 2022.

\bibitem{pavlova2022efficiency}
\BIBentryALTinterwordspacing
D.~Bankov, A.~Lyakhov, I.~Pavlova, and E.~Khorov, ``On the efficiency of
  carrier sense multiple access in {NB-Fi} networks,'' \emph{Journal of
  Communications Technology and Electronics}, vol.~67, no.~1, pp. S93--S99,
  2022. [Online]. Available:
  \url{https://link.springer.com/article/10.1134/S1064226922130010}
\BIBentrySTDinterwordspacing

\bibitem{bankov2017pimrc}
D.~Bankov, E.~Khorov, and A.~Lyakhov, ``Mathematical model of {LoRaWAN} channel
  access with capture effect,'' in \emph{Personal, Indoor, and Mobile Radio
  Communications (PIMRC), 2017 IEEE 28th Annual International Symposium
  on}.\hskip 1em plus 0.5em minus 0.4em\relax IEEE, 2017, pp. 1--5.

\bibitem{capuzzo2018mathematical}
M.~Capuzzo, D.~Magrin, and A.~Zanella, ``Mathematical modeling of {LoRaWAN}
  performance with bi-directional traffic,'' in \emph{2018 IEEE Global
  Communications Conference (GLOBECOM)}.\hskip 1em plus 0.5em minus 0.4em\relax
  IEEE, 2018, pp. 206--212.

\bibitem{huang2020loradar}
Q.~Huang, Z.~Luo, J.~Zhang, W.~Wang, and Q.~Zhang, ``{LoRadar}: Enabling
  concurrent radar sensing and {LoRa} communication,'' \emph{IEEE Transactions
  on Mobile Computing}, 2020.

\bibitem{bankov2020algorithm}
\BIBentryALTinterwordspacing
D.~Bankov, E.~Khorov, and A.~Lyakhov, ``An algorithm to satisfy the {QoS}
  requirements in a heterogeneous {LoRaWAN} network,'' in \emph{2020 IEEE
  Symposium on Computers and Communications (ISCC)}.\hskip 1em plus 0.5em minus
  0.4em\relax IEEE, 2020, pp. 1--6. [Online]. Available:
  \url{https://ieeexplore.ieee.org/document/9219735}
\BIBentrySTDinterwordspacing

\bibitem{cuomo2017explora}
F.~Cuomo, M.~Campo, A.~Caponi, G.~Bianchi, G.~Rossini, and P.~Pisani,
  ``{EXPLoRa}: Extending the performance of {LoRa} by suitable spreading factor
  allocations,'' in \emph{Wireless and Mobile Computing, Networking and
  Communications (WiMob),}.\hskip 1em plus 0.5em minus 0.4em\relax IEEE, 2017,
  pp. 1--8.

\bibitem{abdelfadeel2018fair}
K.~Q. Abdelfadeel, V.~Cionca, and D.~Pesch, ``Fair adaptive data rate
  allocation and power control in {LoRaWAN},'' in \emph{A World of Wireless,
  Mobile and Multimedia Networks (WoWMoM), 2018 IEEE 18th International
  Symposium on}.\hskip 1em plus 0.5em minus 0.4em\relax IEEE, 2018.

\bibitem{croce2017impact}
D.~Croce, M.~Gucciardo, I.~Tinnirello, D.~Garlisi, and S.~Mangione, ``Impact of
  spreading factor imperfect orthogonality in {LoRa} communications,'' in
  \emph{International Tyrrhenian Workshop on Digital Communication}.\hskip 1em
  plus 0.5em minus 0.4em\relax Springer, 2017, pp. 165--179.

\bibitem{zorbas2018improving}
D.~Zorbas, G.~Papadopoulos, P.~Maille, N.~Montavont, and C.~Douligeris,
  ``Improving {LoRa} network capacity using multiple spreading factor
  configurations,'' in \emph{25th International Conference on Telecommunication
  (ICT)}.\hskip 1em plus 0.5em minus 0.4em\relax IEEE, 2018.

\bibitem{arikan2009channel}
E.~Arikan, ``Channel polarization: A method for constructing capacity-achieving
  codes for symmetric binary-input memoryless channels,'' \emph{IEEE
  Transactions on information Theory}, vol.~55, no.~7, pp. 3051--3073, 2009.

\bibitem{rfc8891}
\BIBentryALTinterwordspacing
V.~Dolmatov and D.~Baryshkov, ``{GOST R 34.12-2015}: Block cipher "magma",''
  RFC 8891 (INFORMATIONAL), Internet Engineering Task Force, Sep. 2020.
  [Online]. Available: \url{http://www.ietf.org/rfc/rfc8891.txt}
\BIBentrySTDinterwordspacing

\bibitem{ping2001zigzag}
L.~Ping, X.~Huang, and N.~Phamdo, ``Zigzag codes and concatenated zigzag
  codes,'' \emph{IEEE Transactions on Information Theory}, vol.~47, no.~2, pp.
  800--807, 2001.

\end{thebibliography}

\end{document}